\newcommand{\vts}[1]{V(#1)}
\newcommand{\eds}[1]{E(#1)}
\newcommand{\vn}{\varnothing}
\newcommand{\se}{\subseteq}
\newcommand{\sm}{\setminus}
\newcommand{\cro}[1]{\left[#1\right]}
\newcommand{\ceil}[1]{\lceil #1\rceil}
\newcommand{\floor}[1]{\lfloor #1\rfloor}
\newcommand{\rset}{\mathbb{R}}
\newcommand{\stab}[1]{\mathsf{STAB}(#1)}
\newcommand{\comp}[1]{\overline{#1}}
\newcommand{\mntc}{4-critical\xspace}
\newcommand{\set}[1]{\left\{#1\right\}}
\newcommand{\cl}[1]{\overline{L(#1)}}
\newcommand{\stc}{{\sf ST}-cover\xspace}
\newtheorem{definition}{Definition}
\newtheorem{proposition}[definition]{Proposition}
\newtheorem{theorem}[definition]{Theorem}
\newtheorem{corollary}[definition]{Corollary}
\newtheorem{lemma}[definition]{Lemma}
\begin{document}

\title{On 4-critical $t$-perfect graphs}

\author{Yohann Benchetrit\thanks{Universit\'e Libre de Bruxelles, Department of Mathematics, Algebra and Combinatorics, partially supported by ARC Grant AUWB-2012-12/17-ULB2-COPIHYMA ({\tt yohann.benchetrit@ulb.ac.be})}}

\date{}

\maketitle

\begin{abstract}
	It is an open question whether the chromatic number of $t$-perfect graphs is bounded by a constant. The largest known value for this parameter is 4, and the only example of a 4-critical $t$-perfect graph, due to Laurent and Seymour, is the complement of the line graph of the prism $\Pi$ (a graph is 4-critical if it has chromatic number 4 and all its proper induced subgraphs are 3-colorable).
	
	In this paper, we show a new example of a 4-critical $t$-perfect graph: the complement of the line graph of the 5-wheel $W_5$.
	Furthermore, we prove that these two examples are in fact the only 4-critical $t$-perfect graphs in the class of complements of line graphs.
	As a byproduct, an analogous and more general result is obtained for $h$-perfect graphs in this class. 
	
	The class of $P_6$-free graphs is a proper superclass of complements of line graphs and appears as a natural candidate to further investigate  the chromatic number of $t$-perfect graphs. We observe that a result of Randerath, Schiermeyer and Tewes implies that every $t$-perfect  $P_6$-free graph is 4-colorable. Finally, we use results of Chudnovsky et al to show that $\cl{W_5}$ and $\cl{\Pi}$ are also the only 4-critical $t$-perfect $P_6$-free graphs. 

\end{abstract}

\section{Introduction}\label{intro}

The original definition  of a perfect graph (due to Berge) is, at least in appearance, a purely combinatorial condition on the gap between the chromatic number and one of its natural lower bounds; a graph is \emph{perfect} if for each of its induced subgraphs, the chromatic number and the clique number are equal.

Fundamental works of Lov\'asz \cite{Lovasz1972} and Fulkerson \cite{Fulkerson1972} imply, as stated by Chv\'atal \cite{Chvatal1975}, that perfect graphs are characterized by the structure of their stable set polytope (i.e the convex hull of their stable sets): a graph is perfect if and only if the facets of its stable set polytope are defined by non-negativity and clique inequalities (see Section \ref{defs} for precise definitions).

This polyhedral condition may be relaxed in various ways, each leading to a different generalization of the notion of perfection (see for example \cite{Shepherd1994,Wagler2002}). 
A graph is \emph{$h$-perfect} if every facet of its stable set polytope is  defined by a non-negativity or  clique inequality, or by an odd circuit inequality. A graph is \emph{$t$-perfect} if it is $h$-perfect and $K_4$-free. 
The polyhedral characterization of perfection easily shows that  perfect graphs are $h$-perfect. 

Obviously, each $K_4$-free perfect graph is 3-colorable. Yet, it is still open whether the chromatic number of $t$-perfect graphs is bounded by a constant.

We say that a graph is \emph{4-critical} if it is not 3-colorable, and each of its proper induced subgraphs is 3-colorable. Laurent and Seymour showed a 4-critical $t$-perfect graph: the complement of the line graph of the prism $\Pi$ (Figure \ref{PrismAndLine}). Its chromatic number is 4 (as for every 4-critical graph), which is currently the largest value known for the chromatic number of a $t$-perfect graph. 

\begin{figure}
	\centering
	\begin{tikzpicture}[scale=1.5]
	\tikzset{vertex/.style={draw,circle,scale=0.3}}
	\tikzset{texts/.style={scale=0.7}}
	
	\begin{scope}[xshift=-3cm,yshift=-.2cm]
	\node[vertex] (A) at (-.5,1) {};
	\node[vertex] (B) at (.5,1) {};
	\node[vertex] (C) at (0,0.75) {};
	
	\node[vertex] (D) at (-.5,-.5) {};
	\node[vertex] (E) at (.5,-.5) {};
	\node[vertex] (F) at (0,-.25) {};
	
	\draw (A)--(B)--(C)--(A);
	\draw (D)--(E)--(F)--(D);
	\draw (A)--(D);
	\draw (C)--(F);
	\draw (B)--(E);
	
	\end{scope}

	\begin{scope}

	\node[vertex] (C) at (90:0.25) {};
	\node[vertex] (E) at (210:0.25) {};
	\node[vertex] (F) at (330:0.25) {};
	\node[vertex] (G) at (-15:.7) {};
	\node[vertex] (I) at (-45:.7) {};
	\node[vertex] (D) at (195:.7) {};
	\node[vertex] (H) at (225:.7) {};
	\node[vertex] (A) at (105:.7) {};
	\node[vertex] (B) at (75:.7) {};
	
	\draw (C)--(E)--(F)--(C);
	\draw (A)--(B)--(G)--(I)--(H)--(D)--(A);
	\draw (A)--(C)--(B);
	\draw (G)--(F)--(I);
	\draw (D)--(E)--(H);
	
	\draw (A)to[bend left=90,min distance=1cm] (I);
	\draw (B)to[bend right=90,min distance=1cm](H);
	\draw (D)to[bend right=90,min distance=1cm](G);
	
	\end{scope}
	
	%

	\end{tikzpicture}
	\caption{$\Pi$ and $\cl{\Pi}$}\label{PrismAndLine}
\end{figure}

It is easy to check that $t$-perfect graphs whose complement is a line graph have chromatic number at most 4 (see Section \ref{proofs}). These facts motivate searching for possibly more 4-critical $t$-perfect graphs in the class of complements of line graphs.

The first main result of this paper is a new example of a \mntc $t$-perfect  graph: the complement of the line graph of the wheel $W_5$ (Figure \ref{clw5}).

\begin{figure}[h]
	\centering
	\begin{tikzpicture}[scale=1.5]
		\tikzset{vertex/.style={draw,circle,scale=0.3}}
		
		\tikzset{vertex/.style={draw,circle,scale=0.3}}
		\tikzset{texts/.style={scale=0.7}}
		
		\begin{scope}
		\node[vertex] (A) at (0,1) {};
		\node[vertex] (B) at (162:1) {};
		\node[vertex] (C) at (234:1) {};
		\node[vertex] (D) at (306:1) {};
		\node[vertex] (E) at (378:1) {};
		
		\node[vertex] (Cent) at (0,0) {}; 
		
		\draw (A) -- (B) -- (C) -- (D) -- (E) -- (A);
		
		\draw (Cent) -- (A);
		\draw (Cent) -- (B);
		\draw (Cent) -- (C);
		\draw (Cent) -- (D);
		\draw (Cent) -- (E);
		
		
		\end{scope}
		
		\begin{scope}[xshift=2.5cm]
		\node[vertex] (H) at (0,0) {};
		\node[vertex] (E) at (45:.5) {};
		\node[vertex] (I) at (-45:.5) {};
		\node[vertex] (D) at (135:.5) {};
		\node[vertex] (G) at (225:.5) {};
		
		\node[vertex] (B) at (55:1) {};
		\node[vertex] (F) at (15:1){};
		\node[vertex] (A) at (125:1){};
		\node[vertex] (C) at (165:1){};
		\node[vertex] (J) at (0,-.8){};
		
		\draw (C)--(D)--(B)--(F);
		\draw (F)--(E)--(A)--(C);
		\draw (C)to[bend left=90, min distance=1cm] (F);
		\draw (A)--(D);
		\draw (B)--(E);
		
		\draw (E)--(H)--(I)--(E);
		\draw (D)--(H)--(G)--(D);
		\draw (H)--(J);
		
		\draw (C)to[bend right=90,min distance=.5cm](J);
		\draw (J)to[bend right=90,min distance=.5cm](F);
		
		\draw (C)to[out=-90,in=180]($(G)+(0,-.35)$)to[out=0,in=225](I);
		\draw (G)to[out=-45,in=180] ($(I)+(0,-.35)$) to[out=0,in=-90](F);
		
		
		\end{scope}
		

%
%
%
%
%
%
%
%
%
%
	\end{tikzpicture}
	\caption{$W_5$ and $\cl{W_5}$}\label{clw5}
\end{figure}

To the best of our knowledge, $\cl{\Pi}$ and $\cl{W_5}$ are the only known available examples of \mntc $t$-perfect graphs. 
The other main result of the paper states that, in fact, no more such graphs can be found in the class of graphs whose complement is a line graph:

\begin{theorem}\label{colcltp}
	$\cl{\Pi}$ and $\cl{W_5}$ are the only 4-critical $t$-perfect graphs in the class of complements of line graphs.
\end{theorem}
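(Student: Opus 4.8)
The plan is to push the whole question through the line-graph correspondence into a statement about the ``preimage'' graph $H$ with $G=\cl{H}$, and then to play the polyhedral hypothesis ($t$-perfection) against the coloring hypothesis (4-criticality). Writing $G=\cl{H}$ for a connected $H$ --- unique up to the usual Whitney $K_3$/$K_{1,3}$ ambiguity --- the vertices of $G$ are the edges of $H$, the cliques of $G$ are exactly the matchings of $H$, and the stable sets of $G$ are exactly the stars and triangles of $H$ (a pairwise-intersecting family of edges is a star or a triangle). Hence a proper coloring of $G$ is precisely an \stc of $H$, so $\chi(G)$ equals the minimum number of stars and triangles covering $\eds{H}$, and $G$ is $K_4$-free if and only if $\nu(H)\le 3$. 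Under this dictionary the hypotheses read: $\stab{G}$ has only non-negativity, clique and odd-circuit facets; $\nu(H)\le 3$; the minimum \stc of $H$ equals $4$; and deleting any edge of $H$ drops this minimum to $3$.

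First I would use the two criticality constraints to cut the problem down to finitely many $H$. From $\nu(H)\le 3$, the endpoints of a maximum matching form a vertex cover $C$ with $|C|\le 6$, so every edge of $H$ meets $C$. The $4$-vertex-criticality of $G$ then rules out the degeneracies that would let $H$ grow without bound: such a $G$ has minimum degree at least $3$, is $2$-connected, and contains no vertex whose neighborhood is contained in that of another. Translating back, these forbid repeated low-degree structure in $H$ (e.g.\ two leaves at a common vertex of $C$ have the same neighborhood in $G$); combined with the edge-criticality of the \stc, which prevents redundant ``parallel'' edges, they should bound $|\vts{H}|$ and $|\eds{H}|$. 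I expect this to leave $H$ in a finite, explicitly listable family of graphs on a bounded number of vertices.

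\textbf{The main obstacle} is the remaining step: deciding $t$-perfection on this finite list. Unlike the previous reductions this is a polyhedral, not a purely combinatorial, condition, so for each candidate $H$ one must either certify that $\stab{\cl{H}}$ is cut out by non-negativity, clique and odd-circuit inequalities, or exhibit a facet of some other type. I would organize this through a forbidden-configuration criterion for $t$-perfection within complements of line graphs: every candidate that is not $W_5$ or $\Pi$ should contain a small induced subgraph whose complement-of-line-graph is $t$-imperfect, witnessed by an explicit extra facet (typically arising from an odd-$K_4$-type obstruction in $G$), while $t$-perfection of $\cl{W_5}$ and $\cl{\Pi}$ is already in hand. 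Since a $K_4$-free perfect graph is $3$-colorable, each surviving candidate is imperfect and hence, by the strong perfect graph theorem, contains an odd hole or odd antihole; locating these in $\cl{H}$ is a convenient way to search for the certificate. A final appeal to Whitney's theorem checks that distinct surviving $H$ do not collapse to the same complement of a line graph, leaving exactly $\cl{W_5}$ and $\cl{\Pi}$.

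The same scheme, run without the $K_4$-free reduction $\nu(H)\le 3$, should also settle the advertised $h$-perfect version: $h$-perfection permits $K_4$, which as a complement of a line graph ($K_4=\cl{4K_2}$) is a further 4-critical example, and one expects $K_4$, $\cl{\Pi}$ and $\cl{W_5}$ to be the complete list there.
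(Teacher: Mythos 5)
Your translation into the source graph $H$ (vertices of $\cl{H}$ are edges of $H$, $\chi(\cl{H})=\gamma(H)$ for simple $H$, $K_4$-freeness of $G$ is $\nu(H)\le 3$, vertex-criticality of $G$ is edge-criticality of the \stc number of $H$) is exactly the paper's starting point, and your criticality observations (no parallel edges, no two leaves at a common vertex) are sound. But from there your plan has two genuine gaps. First, the finiteness reduction is asserted, not proved: a vertex cover of size at most 6 bounds nothing by itself, and the no-comparable-neighborhoods condition does not obviously dispose of, say, many degree-2 vertices $w,w'$ outside the cover with distinct neighbor pairs --- the vertices $wc_1$ and $w'c_1$ of $G$ need not have comparable neighborhoods when $w$ and $w'$ have different second neighbors. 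You write that these constraints ``should'' bound $|\vts{H}|$; that is a conjecture standing in for a proof, and even granting it you would be re-deriving by hand something close to the Chudnovsky et al enumeration of \mntc $P_6$-free graphs, the very route the paper explicitly describes as considerably more complicated than needed for this theorem.

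The decisive gap is the step you yourself flag as the main obstacle: certifying $t$-perfection or $t$-imperfection of each candidate. You offer no mechanism for this beyond hoping for per-candidate facet certificates, and your proposed search heuristic is misdirected: by the strong perfect graph theorem an imperfect candidate contains an odd hole or odd antihole, but odd holes are $t$-perfect (odd-circuit inequalities are permitted facets), so locating one certifies nothing. The missing idea is Shepherd's theorem (Theorem \ref{shepherd}), which describes \emph{all} facets of $\stab{G}$ for complements of line graphs as set-join inequalities of a clique and induced odd antiholes; from it the paper derives Corollary \ref{small}: $\cl{H}$ is $h$-perfect if and only if every odd circuit of $H$ has length at most 5 and every edge of $H$ has an end in every circuit of length 5. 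This converts the polyhedral hypothesis into a combinatorial one once and for all, and it is what lets the paper bypass enumeration entirely: from $\nu(H)\le 3$ and the Cunningham--Marsh theorem (Theorem \ref{cunmarsh}) one obtains an optimal cover by stars and 2-connected factor-critical subgraphs; Lov\'asz's odd-ear-decomposition theorem (Theorem \ref{lovasz}) combined with Corollary \ref{small} forces every non-triangle factor-critical piece $F$ to have a spanning $C_5$, whence $V(H)=V(F)\cup\set{v}$ for a single extra vertex $v$; and a short case analysis on $d_H(v)\in\set{1,2,3,4}$, using only the excluded subgraphs $W_5$ and $\Pi$, constructs an \stc of cardinality 3 in every case. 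Without Shepherd's theorem, or some equivalent handle on the facets, your plan cannot be completed as written; with it, the finite list you were aiming for never has to be produced.
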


Using an argument of Seb\H{o} (see \cite[Lemma 26]{Bruhn2012}), we easily obtain the following (strictly) stronger result as a corollary. Let $\chi_f(G)$ denote the fractional chromatic number of a graph $G$ (see Section \ref{defs}).

\begin{corollary}\label{colclhp}
	For every $h$-perfect graph $G$ which is the complement of a line graph, the following statements are equivalent:
	\begin{itemize}
		\item [i)] every induced subgraph $H$ of $G$ satisfies $\chi(H)=\ceil{\chi_f(H)}$;
		\item [ii)] $G$ does not contain $\cl{\Pi}$ or $\cl{W_5}$ as an induced subgraph.
	\end{itemize}
	\end{corollary}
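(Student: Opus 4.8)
The plan is to derive the equivalence directly from Theorem~\ref{colcltp}, after reducing the $h$-perfect setting to the $t$-perfect one. First I would record a short computation from the description of $\stab{F}$ of an $h$-perfect graph $F$ by non-negativity, clique and odd-circuit inequalities. Writing $\chi_f(F)=\min\{t\ge 0:\ \tfrac{1}{t}\mathbf{1}\in\stab{F}\}$, the clique inequalities force $t\ge\omega(F)$, while each odd-circuit inequality forces $t\ge |V(C)|/\floor{|V(C)|/2}$, a ratio equal to $3$ for triangles (already implied by $\omega(F)\ge 3$) and at most $5/2$ for longer odd circuits. Consequently $\ceil{\chi_f(F)}=\omega(F)$ whenever $\omega(F)\ge 3$, and $\ceil{\chi_f(F)}=3$ exactly when $\omega(F)=2$ and $F$ is non-bipartite; in particular every $t$-perfect graph (being $K_4$-free) satisfies $\chi_f(F)\le 3$. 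Since $\chi(H)\ge\ceil{\chi_f(H)}$ holds for every graph, statement~(i) is equivalent to the assertion that $\chi(H)\le\ceil{\chi_f(H)}$ for every induced subgraph $H$ of $G$.

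The implication (i)$\Rightarrow$(ii) is then immediate in contrapositive form. If $G$ contains $\cl{\Pi}$ or $\cl{W_5}$ as an induced subgraph $H_0$, then $H_0$ is $t$-perfect, so $\ceil{\chi_f(H_0)}\le 3$ by the computation above, whereas $\chi(H_0)=4$ because $H_0$ is $4$-critical. Thus $\chi(H_0)\neq\ceil{\chi_f(H_0)}$ and (i) fails for $H=H_0$.

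For the reverse implication (ii)$\Rightarrow$(i) I would argue by contradiction, assuming some induced subgraph $H$ of $G$ satisfies $\chi(H)>\ceil{\chi_f(H)}$. Both $h$-perfection and being the complement of a line graph are hereditary, so $H$ still enjoys them. The key step is to apply Seb\H{o}'s argument (\cite[Lemma 26]{Bruhn2012}) to pass to the $K_4$-free case: it supplies an induced $t$-perfect subgraph $H'$ of $H$ that still violates $\chi(H')>\ceil{\chi_f(H')}$. As $H'$ is $t$-perfect we have $\ceil{\chi_f(H')}\le 3$, whence $\chi(H')\ge 4$; choosing a vertex-minimal non-$3$-colorable induced subgraph $H_0$ of $H'$ yields a $4$-critical graph that is again $t$-perfect and a complement of a line graph. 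By Theorem~\ref{colcltp}, $H_0$ must be $\cl{\Pi}$ or $\cl{W_5}$, and being an induced subgraph of $G$ it contradicts~(ii). This establishes the equivalence.

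The substance lies entirely in this reduction step. The cases $\omega(H)\le 3$ are already $K_4$-free and hand off directly to Theorem~\ref{colcltp}, so the only genuinely new phenomenon is $\omega(H)\ge 4$, where one must certify $\chi(H)=\omega(H)=\ceil{\chi_f(H)}$ in the presence of large cliques; I expect correctly invoking Seb\H{o}'s argument to carry out exactly this passage to a $t$-perfect subgraph, while preserving the defect $\chi>\ceil{\chi_f}$, to be the main obstacle. The remaining verifications --- that the two odd-circuit ratios behave as claimed, and that $4$-criticality, $t$-perfection and being the complement of a line graph are all preserved under taking induced subgraphs --- should be routine.
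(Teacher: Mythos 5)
Your proposal is correct and takes essentially the same route as the paper: your computation of $\ceil{\chi_f(F)}$ for $h$-perfect $F$ via the characterization $\chi_f(F)=\min\{t:\frac{1}{t}\mathbf{1}\in\stab{F}\}$ is exactly Proposition \ref{fcn}, and your minimal-counterexample reduction is the paper's induction on $\omega$ (base case $\omega\le 3$ via Theorem \ref{colcltp}, inductive step via Seb\H{o}'s Lemma \ref{sebo}) run in contrapositive form. The step you flag as the main obstacle is indeed routine once $\ceil{\chi_f(H)}=\omega(H)$ for $\omega(H)\ge 3$ is established, since removing the stable set $S$ meeting all maximum cliques gives $\chi(H)\le\chi(H-S)+1$ and $\ceil{\chi_f(H-S)}=\omega(H)-1$, which preserves the defect $\chi>\ceil{\chi_f}$.
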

A corresponding coloring or an obstruction can be easily found in polynomial-time. Since the complexity of testing $h$-perfection (and even $t$-perfection) is still open, this raises the problem of how an $h$-perfect graph is given as input to the algorithm. In fact, there is no ambiguity since we will show that \emph{$h$-perfection can be (easily) tested in polynomial-time for graphs which are the complement of a line graph}.

Every $t$-perfect graph $G$ satisfies $\chi_f(G)\leq 3$ (see Section \ref{defs}), and therefore Corollary \ref{colclhp} yields Theorem \ref{colcltp}, when specialized to $K_4$-free graphs. 
\vspace{5pt}

A graph is \emph{$P_k$-free} if it does not contain the path $P_k$ with $k$ vertices as an induced subgraph. It is straightforward to check that the complement of $P_6$ is not a line graph. Hence, complements of line graphs form a (proper) subclass of $P_6$-free graphs (whose structure is considerably richer and more difficult to study; see  \cite{ChudnovskyGoedgebeurSchaudtEtAl2015}). 
Since the only known examples of 4-critical $t$-perfect graphs are complements of line graphs, it is tempting to further investigate the chromatic number of $t$-perfect graphs in the class of $P_6$-free graphs.

We first observe that a result of Randerath, Schiermeyer and Tewes \cite{Randerath2002} implies that  $P_6$-free graphs  are of no use in finding a $t$-perfect graph of chromatic number greater than 4 (whether such a graph exists is open). More generally, we show:

\begin{theorem}\label{p6free}
	Every $h$-perfect $P_6$-free graph $G$ is $(\omega(G)+1)$-colorable.
\end{theorem}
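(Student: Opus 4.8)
The plan is to combine a structural restriction on the odd holes of $G$, coming from $P_6$-freeness, with an induction on $\omega(G)$ whose base case is handed to us by a coloring theorem of Randerath, Schiermeyer and Tewes. The key structural observation I would record first is that \emph{in a $P_6$-free graph every odd hole is a $C_5$}: any odd cycle $C_{2k+1}$ with $k\ge 3$ contains six consecutive vertices inducing a $P_6$. Thus the only odd-hole inequalities that can occur among the facets of $\stab{G}$ come from induced $5$-cycles, and combined with $h$-perfection this gives the fractional estimate $\chi_f(G)\le\max\{\omega(G),3\}$ (clique inequalities contribute a ratio at most $\omega(G)$, a $C_5$ the ratio $5/2\le 3$). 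This is what the integral bound must round, and it already isolates the two regimes $\omega(G)\ge 3$ and $\omega(G)\le 2$ that the induction treats separately.

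The induction is on $\omega(G)$. In the inductive step $\omega(G)\ge 3$ I would seek a stable set $S$ meeting every maximum clique of $G$. Deleting it leaves an induced subgraph $G-S$ that is again $h$-perfect and $P_6$-free, with $\omega(G-S)\le\omega(G)-1$; by the induction hypothesis $\chi(G-S)\le\omega(G)$, and assigning $S$ a single fresh color yields $\chi(G)\le\omega(G)+1$. (Unrolled, this peels $G$ down to a triangle-free induced subgraph while spending exactly $\omega(G)-2$ colors.)

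For the base case $\omega(G)\le 2$, i.e. $G$ triangle-free and hence $t$-perfect, the target is $\chi(G)\le 3$. Here I would invoke the theorem of Randerath, Schiermeyer and Tewes that triangle-free $P_6$-free graphs are $4$-colorable, and that in this class the only $4$-critical graph is the \grotzsch graph. Were $G$ not $3$-colorable it would contain a $4$-critical induced subgraph, which (being triangle-free and $P_6$-free) would have to be the \grotzsch graph. But the \grotzsch graph is known not to be $t$-perfect, while $h$-perfection is inherited by induced subgraphs; this contradiction shows that $G$ is $3$-colorable.

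The main obstacle is the inductive step: producing, when $\omega(G)\ge 3$, a stable set that meets all maximum cliques. This is exactly where $P_6$-freeness---and not merely $h$-perfection---is indispensable, since an unconditional bound $\chi\le\omega+1$ for $h$-perfect graphs would already resolve the boundedness question recalled in the introduction. My intended route is to exploit that, with $\omega(G)\ge 3$ and every odd hole a triangle-free $C_5$, the maximum cliques (of size at least $3$) avoid the imperfection of $G$: I would aim to show that the induced subgraph spanned by the vertices lying in maximum cliques is perfect (or at least admits a stable set meeting all of its maximum cliques), take an optimal $\omega(G)$-coloring of it, and observe that each maximum clique, being rainbow, meets every color class---so any color class is a stable set hitting all maximum cliques. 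Establishing this local perfection, and checking that the chosen stable set extends to one meeting every maximum clique of $G$ while keeping $\omega(G-S)\le\omega(G)-1$, is the delicate part I expect to absorb most of the work.
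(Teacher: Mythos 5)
Your overall skeleton---induction on $\omega(G)$, with the base case $\omega(G)\leq 2$ handled by the Randerath--Schiermeyer--Tewes theorem together with the non-$t$-perfection of the Mycielski-Gr\"otzsch graph---is exactly the paper's, and the base case is sound (the paper justifies the non-$t$-perfection concretely, by contracting the full star at $u$ via Gerards and Shepherd's $t$-contraction result to obtain $W_5$, whereas you quote it as known; that is a minor difference). The genuine gap is in your inductive step, and the meta-argument you use to shape it is a non sequitur. The stable set meeting every maximum clique exists for \emph{every} $h$-perfect graph with $\omega\geq 3$: this is Seb\H{o}'s lemma (Lemma 26 in Bruhn--Stein, restated as Lemma \ref{sebo} in the paper; it is stated there for claw-free graphs, but that hypothesis is not used in its proof). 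So $P_6$-freeness is not needed in the inductive step at all. Your argument that such an unconditional lemma ``would already resolve the boundedness question'' is incorrect: the lemma only reduces coloring $h$-perfect graphs to the triangle-free ($t$-perfect) case, and bounding the chromatic number there is precisely what remains open (Seb\H{o}'s conjecture that triangle-free $t$-perfect graphs are 3-colorable, recalled in Section \ref{conclusion}). $P_6$-freeness enters only in the base case, through Theorem \ref{randerath}.

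As a consequence, the ``delicate part'' you defer---proving that the subgraph induced by the vertices lying in maximum cliques is perfect, and extending a color class to a transversal of all maximum cliques of $G$---is both unnecessary and unestablished: as written, the proposal contains no proof of the existence of $S$, only an intended route. That route is also doubtful on its own terms: nothing in your outline prevents the union of the maximum cliques from inducing an odd hole (every vertex of an induced $C_5$ may lie in a triangle, and with $\omega=3$ these are maximum cliques), so the claimed ``local perfection'' would require a substantial argument rather than following from the exclusion of long odd holes. Replacing that paragraph by an appeal to Lemma \ref{sebo} (noting that $S$ stable gives $\omega(G-S)=\omega(G)-1$, and that $G-S$ is again $h$-perfect and $P_6$-free) repairs the proof and makes it coincide with the paper's.
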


Since $\cl{\Pi}$ and $\cl{W_5}$ are $P_6$-free, this bound is tight.
In \cite{ChudnovskyGoedgebeurSchaudtEtAl2015}, Chudnovsky et al showed the explicit list of all \mntc $P_6$-free graphs (there, \mntc graphs are called \emph{4-vertex-critical}). There are exactly 80 such graphs, and the largest one has 16 vertices. 
It is not difficult to design a computer program  that tests whether a graph is $t$-perfect (with the help of \cite{sage}, for example), and whose running time remains practical for graphs with no more than 16 vertices (recall that the computational complexity of testing $t$-perfection is open). 
Using such a program, we could check: 

\begin{theorem}\label{p6critical}
	$\cl{\Pi}$ and $\cl{W_5}$ are the only \mntc $t$-perfect $P_6$-free graphs.
\end{theorem}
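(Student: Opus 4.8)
The plan is to verify Theorem~\ref{p6critical} computationally, leaning on the finite list of \mntc $P_6$-free graphs already established by Chudnovsky et al in \cite{ChudnovskyGoedgebeurSchaudtEtAl2015}. Their result gives us exactly 80 such graphs, each with at most 16 vertices, so the entire search space is finite and small. The approach is therefore purely one of filtering: among these 80 candidates, we need to identify precisely those that are $t$-perfect, and confirm that $\cl{\Pi}$ and $\cl{W_5}$ are the only two survivors. Since every $t$-perfect graph is in particular \mntc only if it is not $3$-colorable with all proper induced subgraphs $3$-colorable, and the Chudnovsky et al list already guarantees this critical property, we may take these 80 graphs as given and test $t$-perfection directly on each.

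First I would obtain the explicit adjacency data for the 80 \mntc $P_6$-free graphs from \cite{ChudnovskyGoedgebeurSchaudtEtAl2015}. Then, for each graph $G$ in the list, I would test whether $G$ is $t$-perfect. Recall that $G$ is $t$-perfect if and only if it is $h$-perfect and $K_4$-free; since all the graphs on the list are $4$-critical they contain no $K_4$ (a $K_4$ is not $3$-colorable but is too small to be $4$-critical in the relevant sense, and in any case $K_4$-freeness is a quick check), so the substantive test is $h$-perfection. The direct way to certify $t$-perfection computationally is to confirm that the stable set polytope $\stab{G}$ is described exactly by non-negativity, edge (clique), and odd-circuit inequalities: concretely, one checks that adding these inequalities to the system yields an integral polytope, or equivalently that every facet of $\stab{G}$ is of one of these three types. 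With a tool such as \cite{sage} one can enumerate the vertices and facets of $\stab{G}$ and match them against the permitted inequality classes.

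The key steps, in order, are: (1) import the 80 graphs; (2) for each, verify $K_4$-freeness (trivial) and then run the facet-enumeration or integrality test for $t$-perfection; (3) record which graphs pass; and (4) verify that the passing set is exactly $\{\cl{\Pi}, \cl{W_5}\}$, cross-checking by direct isomorphism testing against these two named graphs. Since $\cl{\Pi}$ and $\cl{W_5}$ were already shown to be $t$-perfect earlier in the paper, their membership serves as a correctness sanity check on the program.

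The main obstacle is not conceptual but computational reliability: testing $t$-perfection reduces to polyhedral computations whose running time can grow rapidly with the number of vertices, and at $16$ vertices the stable set polytope can have a large number of facets. The worry is therefore whether the facet enumeration terminates in practical time and whether the integer arithmetic is exact enough to avoid false positives or negatives. I expect this to be manageable precisely because the graphs are small (at most $16$ vertices) and sparse-ish, keeping the polytope dimension bounded; nonetheless, care is needed to use exact rational arithmetic rather than floating point, and to double-check any borderline case by an independent method (for instance, directly searching for a non-permitted facet-defining inequality, or confirming $3$-colorability of every proper induced subgraph against the failure of $3$-colorability of the whole). Once each of the 80 tests is carried out with exact arithmetic and the output matches the two expected graphs, the theorem follows.
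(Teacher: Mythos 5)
Your proposal is correct and is essentially the paper's own proof: the paper likewise filters the explicit list of eighty 4-critical $P_6$-free graphs of \cite{ChudnovskyGoedgebeurSchaudtEtAl2015} by a computer test of $t$-perfection, the only refinement being that instead of asking the reader to trust a facet enumeration it records, for each of the seventy-seven candidates other than $K_4$, $\cl{\Pi}$ and $\cl{W_5}$, an explicit non-integral point of $\mathsf{HSTAB}$ together with the observation that exhibiting $|\vts{G}|$ linearly independent tight inequalities certifies it as a vertex, yielding an independently checkable certificate of non-$t$-perfection. One small correction: $K_4$ itself \emph{is} 4-critical and appears on the list (the paper dismisses it as trivially non-$t$-perfect), so your parenthetical claim that 4-criticality makes all listed graphs $K_4$-free is wrong for this one graph, but this is harmless since your explicit $K_4$-freeness check disposes of it.
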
 

We prove Theorem \ref{p6critical} as follows: the $t$-perfection of $\cl{\Pi}$ and $\cl{W_5}$ is easy to check (see Section \ref{proofs}) and, for each of the other seventy-eight \mntc $P_6$-free graphs, we provide a non-integral vertex of the polytope defined by non-negativity, clique and odd circuit inequalities (as a concise certificate of  non-$t$-perfection). 

Since complements of line graphs are $P_6$-free, Theorem \ref{colcltp} appears as a direct corollary of \cite{ChudnovskyGoedgebeurSchaudtEtAl2015}. Yet, this approach is considerably more complicated than needed, as it requires knowing explicitly the set of \mntc $P_6$-free graphs (more precisely, it needs knowing a characterization of the facets of their stable set polytope); determining this set in such a way involves some rather deep structural analysis as well as a significant amount of computer-based enumeration \cite{ChudnovskyGoedgebeurSchaudtEtAl2015}. The proof of Theorem \ref{colcltp} that we present in Section \ref{proofs} is significantly simpler and hopefully more specific of $t$-perfection (the interest of such a specific approach is discussed further in Section \ref{conclusion}).


\paragraph{Related results} Chv\'atal introduced $t$-perfect graphs in \cite{Chvatal1975}; see \cite[chap. 68]{Schrijver2003} for a survey of the topic. Bruhn and Stein \cite{Bruhn2012} showed that $t$-perfect claw-free graphs are 3-colorable, and derived as a corollary that  every $h$-perfect claw-free graph $G$ satisfies $\chi(G)=\ceil{\chi_f(G)}$ (a graph is \emph{claw-free} if no vertex has three pairwise-non-adjacent neighbors; line graphs form a subclass of claw-free graphs). 
\cite{Bruhn2014} shows a polynomial-time algorithm testing $t$-perfection in claw-free graphs. Also, deciding $h$-perfection in line graphs can be carried out in polynomial-time \cite{Benchetrita}.

Bruhn and Fuchs \cite{BruhnFuchs2015} proved that every $t$-perfect $P_5$-free graph is 3-colorable, by providing a certificate of non-$t$-perfection for each of the twelve \mntc $P_5$-free graphs (which are described explicitly by Maffray and Morel in \cite{maffray20123}). 


Computing the chromatic number of the complement of a line graph is {\sc NP}-hard, since it contains the Vertex-cover Problem for triangle-free graphs as a special case \cite{Poljak1974}. By contrast, deciding 3-colorability in  the class of $P_6$-free graphs (which contains the class of complements of line graph) can be carried out in polynomial-time \cite{Randerath2004}.


\paragraph{Outline} Definitions and basic useful facts are given in Section 2. In Section 3, we show that $\cl{W_5}$ is a \mntc $t$-perfect graph and prove Theorem \ref{colcltp} and Corollary \ref{colclhp}. We also show that $h$-perfection can be tested efficiently for graphs whose complement is a line graph, and derive Theorem \ref{p6free}. Further consequences of our results and open problems are deferred to Section 4. The proof of Theorem \ref{p6critical} is deferred to Appendix \ref{appen}, as it is  of slightly different nature than the other proofs of the paper.

\paragraph{Acknowledgments} We are thankful to Andr\'as Seb\H{o} for his precious remarks and comments which helped improving the presentation of the manuscript.

\section{Definitions and useful facts}\label{defs}
We work with finite undirected graphs only; they may have multiple edges but no loops. We follow \cite{Schrijver2003} for standard terms and notations that are not defined in the paper.

Let $G$ be a graph. We say that $G$ is \emph{simple} if it has no multiple edge (that is an edge which has at least one other parallel edge).  For each $X\se\vts{G}$, let $G\cro{X}$ denote the subgraph of $G$ defined by $V(G\cro{X}):=X$ and $E(G\cro{X})$ is the set of edges of $G$ whose ends both belong to $X$; $G\cro{X}$ is called the subgraph of $G$ \emph{induced by $X$}, and any subgraph of this form is an \emph{induced subgraph} of $G$. A subgraph $H$ of $G$ is \emph{spanning} if $V(H)=V(G)$.
We say that $G$ \emph{contains} a graph $H$ (resp. as an induced subgraph) if $G$ has a subgraph (resp. induced subgraph) isomorphic to $H$, and that $G$ is \emph{$H$-free} if it does not contain $H$ as an induced subgraph.

For each vertex $v$ of $G$, let $\delta_G(v)$ denote the set of edges of $G$ incident to $v$; the set $\delta_G(v)$ is called the \emph{full star at $v$}. A \emph{star} is a subset of a full star. The \emph{degree} of a vertex $v$ in $G$ is  $|\delta_G(v)|$.

A \emph{circuit} is a connected graph whose vertices all have degree 2, and a \emph{path} is a graph obtained by deleting a single edge from a circuit; notice that our paths have two distinct ends. The \emph{length} of a circuit (or a path) is its number of edges. A circuit or path is \emph{odd} if it has odd length.
The \emph{complement} of $G$, denoted by $\comp{G}$, is the simple graph with the same vertex-set as $G$ and whose edges are the non-edges of $G$. The \emph{line graph} of $G$, denoted by $L(G)$, is the graph whose vertex-set is $E(G)$ and such that two edges of $G$ are adjacent in $L(G)$ if and only if they have at least one common end.

A \emph{stable} set of  $G$ is a set of pairwise non-adjacent vertices. The largest cardinality of a stable set is written $\alpha(G)$. A \emph{clique} of $G$ is a stable set of $\comp{G}$, and we write $\omega(G):=\alpha(\comp{G})$.
A \emph{$k$-coloring} of $G$ is a set of $k$ stable sets of $G$ whose union contains every vertex of $G$, and  $G$ is \emph{$k$-colorable} if it has at least one $k$-coloring. The chromatic number of $G$, denoted by $\chi(G)$, is the smallest $k$ for which $G$ admits a $k$-coloring. Obviously, $\omega(G)\leq \chi(G)$. We say that $G$ is \emph{perfect} if each induced subgraph $H$ of $G$ satisfies $\chi(H)=\omega(H)$. 
A graph is \emph{$k$-critical} if it is not $(k-1)$-colorable and all its proper induced subgraphs are $(k-1)$-colorable. Clearly, the chromatic number of such a graph is $k$.

The incidence vector of a set $S\se\vts{G}$, denoted by $\chi^{S}$, is the vector of $\rset^{\vts{G}}$ defined for every $v\in\vts{G}$ by: $\chi^{S}(v)=1$ if $ v\in S $ and $ \chi^{S}(v)=0 $ otherwise. The \emph{stable set polytope} of $G$, denoted by $\stab{G}$, is the convex hull of the incidence vectors of its stable sets. Obviously, each point of $\stab{G}$ satisfies \emph{the non-negativity inequality} $x_v\geq 0$ for every $v\in\vts{G}$, and the \emph{clique inequality} $\sum_{v\in K}x_v\leq 1$ for each clique $K$ of $G$.

Clearly, for each odd circuit $C$ of $G$, the \emph{odd-circuit inequality} $\sum_{v\in V(C)}x_v\leq \frac{|C|-1}{2}$ is satisfied by every point of $\stab{G}$. Let:

\begin{equation*}\label{fonda-eqn-HSTAB}
\mathsf{HSTAB}(G):=\left\{x\in\rset^{\vts{G}}\colon\begin{array}{cc}
x\geq 0,  &  \\ 
\displaystyle \sum_{v\in K}x_v\leq 1  & \text{$\forall K$ clique of $G$,}  \\ 
\displaystyle \sum_{v\in \vts{C}}x_v\leq \dfrac{|\vts{C}|-1}{2}  & \text{$\forall C$ odd circuit of $G$.} \\ 
\end{array}\right\}.
\end{equation*}
A graph $G$ is \emph{$ h $-perfect} if every facet of $\stab{G}$ is defined by a non-negativity, clique, or odd circuit inequality (the \emph{facets} of $\stab{G}$ are in 1-1 correspondence with those inequalities which appear in every description of $\stab{G}$, up to a positive scalar factor; see \cite{Schrijver2003} for further details). In other words, \emph{$G$ is $h$-perfect if and only if $\stab{G}=\mathsf{HSTAB}(G)$}. 

A graph is \emph{$t$-perfect} if it is $h$-perfect and $K_4$-free. Chv\'atal \cite{Chvatal1975} showed that \emph{a graph is perfect if and only if every facet of its stable set polytope is defined by a non-negativity or clique inequality}. This directly implies that \emph{every perfect graph is $h$-perfect}. 

For $n\geq 3$, let $C_n$ denote the circuit of length $n$, and let $W_n$ denote the graph obtained from $C_n$ by adding a new vertex adjacent to every vertex of $C_n$. The graphs $W_{2k+1}$ ($k\geq 1$) are the \emph{odd wheels}. In particular, $W_3=K_4$ is $h$-perfect (since it is perfect), but not $t$-perfect. The \emph{odd holes} are the graphs $C_n$ for $n$ odd and at least 5, and \emph{odd antiholes} are their complements. Notice that $\comp{C_5}$ and $C_5$ are isomorphic. The \emph{prism}, denoted by $\Pi$, is the graph showed in Figure \ref{PrismAndLine} (it is isomorphic to $\comp{C_6}$).

Basic polyhedral arguments show that \emph{$h$-perfection (and $t$-perfection) is closed by taking induced subgraphs}. 
It is easy to check that \emph{odd wheels with at least 5 vertices and odd antiholes with at least 7 vertices are (minimally) $h$-imperfect} \cite[pg. 1194]{Schrijver2003} (further examples of minimally $h$-imperfect graphs are known, though no characterization has been found yet; we do not use them in this paper). Therefore, an $h$-perfect graph cannot contain these as induced subgraphs.

The \emph{fractional chromatic number} of a graph $G$ is the minimum of $\sum_{i=1}^{k}\lambda_i$ for non-negative real numbers $\lambda_1,\ldots,\lambda_k$ for which there exist stable sets $S_1,\ldots,S_k$ of $G$ such that all the coordinates of $\sum_{i=1}^{k}\lambda_i\chi^{S_i}$ are at least 1. It is well-known that $\omega(G)\leq \chi_f(G)\leq \chi(G)$ holds for every graph $G$; each of these inequalities may be strict (as shows the circuit of length 5). Furthermore, \emph{for every graph $G$, $\chi_f(G)=2$ if and only if $G$ is bipartite}.

The duality theorem of linear programming implies the following (see \cite[pg 477]{Bruhn2012} for a proof):

\begin{proposition}\label{fcn}
	For every $h$-perfect graph $G$:
	\[ \chi_f(G)=\max\left(\omega(G),\max\set{\frac{2|V(C)|}{|V(C)|-1}\colon\,\text{$C$ odd circuit of $G$}}\right). \]
\end{proposition}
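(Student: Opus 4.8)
The plan is to prove the two inequalities separately. Write $t$ for the right-hand side of the claimed identity. The lower bound $\chi_f(G)\ge t$ will hold for \emph{every} graph and follows from weak LP duality, whereas the upper bound $\chi_f(G)\le t$ is where $h$-perfection enters, through the identity $\stab{G}=\mathsf{HSTAB}(G)$. Recall that $\chi_f(G)$ is the value of the covering LP $\min\{\mathbf 1^{\top}\lambda:\ \sum_i\lambda_i\chi^{S_i}\ge\mathbf 1,\ \lambda\ge 0\}$ ranging over the (finitely many) stable sets $S_i$ of $G$, whose LP dual is $\max\{\mathbf 1^{\top}y:\ \sum_{v\in S}y_v\le 1\ \text{for every stable set }S,\ y\ge 0\}$.

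For the lower bound I would exhibit one dual-feasible point for each term appearing in $t$. Taking $y=\chi^{K}$ for a maximum clique $K$ is dual-feasible, since a stable set meets $K$ in at most one vertex, and has value $\omega(G)$; this certifies $\chi_f(G)\ge\omega(G)$. Taking $y=\frac{2}{|\vts{C}|-1}\chi^{\vts{C}}$ for an odd circuit $C$ is dual-feasible because the edges of $C$ are edges of $G$, so any stable set meets $\vts{C}$ in at most $\alpha(C)=\frac{|\vts{C}|-1}{2}$ vertices; its value is $\frac{2|\vts{C}|}{|\vts{C}|-1}$, certifying $\chi_f(G)\ge\frac{2|\vts{C}|}{|\vts{C}|-1}$. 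Taking the maximum over these certificates gives $\chi_f(G)\ge t$, with no use of $h$-perfection.

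For the upper bound I would argue directly in the primal. Since $t\ge\omega(G)\ge 1$, the vector $\frac1t\mathbf 1$ is well defined and nonnegative, and I claim it lies in $\mathsf{HSTAB}(G)$: scaling the all-ones vector turns the three families of defining inequalities into exactly the three quantities in $t$. Non-negativity is immediate; the clique inequality for $K$ becomes $|K|/t\le 1$, which holds as $|K|\le\omega(G)\le t$; and the odd-circuit inequality for $C$ becomes $|\vts{C}|/t\le\frac{|\vts{C}|-1}{2}$, i.e. $t\ge\frac{2|\vts{C}|}{|\vts{C}|-1}$, which holds by the definition of $t$. By $h$-perfection, $\mathsf{HSTAB}(G)=\stab{G}$, so $\frac1t\mathbf 1=\sum_i\mu_i\chi^{S_i}$ is a convex combination of incidence vectors of stable sets; multiplying by $t$ yields $\sum_i(t\mu_i)\chi^{S_i}=\mathbf 1$ with total weight $\sum_i t\mu_i=t$, that is, a fractional coloring of weight $t$. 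Hence $\chi_f(G)\le t$, and combining with the lower bound gives equality.

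The only genuine content is the upper bound and its single appeal to $h$-perfection (the equality $\stab{G}=\mathsf{HSTAB}(G)$); everything else is a routine check that $\frac1t\mathbf 1$ satisfies finitely many linear inequalities together with a rescaling. I expect no serious obstacle beyond arranging the right reformulation: the two points to get right are the implication ``membership of $\frac1t\mathbf 1$ in $\stab{G}$ yields a fractional coloring of weight $t$'', and the observation that over $h$-perfect graphs the facets are exhausted by the non-negativity, clique and odd-circuit inequalities, so the maximum in the statement runs over precisely the index set produced by the three inequality families.
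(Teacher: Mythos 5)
Your proof is correct and follows essentially the same route as the paper, which does not reprove Proposition \ref{fcn} but attributes it to LP duality with a pointer to \cite[pg 477]{Bruhn2012}: there, as in your write-up, the lower bound comes from dual-feasible certificates supported on cliques and odd circuits, and the upper bound from observing that $\frac{1}{t}\mathbf{1}$ lies in $\mathsf{HSTAB}(G)=\stab{G}$ and rescales to a fractional coloring of weight $t$. Both of your flagged checks go through (a convex combination $\frac{1}{t}\mathbf{1}=\sum_i\mu_i\chi^{S_i}$ scales to a weight-$t$ fractional coloring, and $h$-perfection exhausts the facets by the three inequality families), and the convention $\max\vn=-\infty$ disposes of the bipartite case where no odd circuit exists.
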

We use the convention $\max \vn=-\infty$ to keep $\chi_f(G)=2$ when $G$ is bipartite. Clearly, Proposition \ref{fcn} implies that \emph{every $t$-perfect graph $G$ satisfies $\chi_f(G)\leq 3$}.
%
%
%
%
%
%
%
%
%
%
%
%
%
%
%
%
%
%
%

A \emph{matching} of a graph $G$ is a set of pairwise non-incident edges of $G$, and the largest cardinality of a matching is denoted by $\nu(G)$. Clearly, $\nu(G)=\omega(\cl{G})$. A \emph{triangle} of $G$ is the set of edges of a circuit of length 3. An \emph{\stc} of $G$ is a set of stars and triangles whose union contains every edge of $G$. The minimum cardinality of an \stc of $G$ is denoted by $\gamma(G)$. Obviously, every simple graph $G$ satisfies $\chi(\cl{G})=\gamma(G)$.
A triangle with a parallel edge shows that this does not hold for non-simple graphs.

\section{Proofs}\label{proofs}
We first derive a characterization of the $h$-perfection of $\cl{H}$ in terms of $H$ from a result of Shepherd \cite{Shepherd1995}. A \emph{set-join} of a graph $G$ is a set $\set{X_1,\ldots,X_l}$ of pairwise-disjoint (possibly empty) subsets of vertices of $G$ such that for all $1\leq i<j\leq l$ and for each $(u,v)\in X_i\times X_j$, we have $uv\in\eds{G}$. The inequality associated to this set-join is:

\[ \displaystyle\sum_{i=1}^{k}\frac{1}{\alpha(G\cro{X_i})}\displaystyle\sum_{v\in X_i}x_v\leq 1. \]
Obviously, it is valid for $\stab{G}$. Shepherd gave a complete description for the stable set polytope of complements of line graphs: 

\begin{theorem}[\cite{Shepherd1995}]\label{shepherd}
	If $G$ is the complement of a line graph, then each facet of $\stab{G}$ is defined by an inequality associated to a set-join of a clique and some induced odd antiholes of $G$.
\end{theorem}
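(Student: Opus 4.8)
\emph{Translation to the edge language of $H$.} Write $G=\cl{H}$, so that $\vts{G}=\eds{H}$ and two vertices of $G$ are adjacent precisely when the corresponding edges of $H$ are disjoint. Under this dictionary a stable set of $G$ is a set of pairwise-intersecting edges of $H$, that is a star or a triangle; a clique of $G$ is a matching of $H$; and for $2k+1\ge 5$ an induced odd antihole $\comp{C_{2k+1}}$ of $G$ corresponds to an induced odd cycle of $H$. A set-join of a clique $K$ and induced odd antiholes $A_1,\dots,A_m$ therefore becomes a vertex-disjoint union in $H$ of a matching $M$ (from $K$, where $\alpha=1$) and odd cycles $O_1,\dots,O_m$ (from the $A_j$, where $\alpha=2$), with associated inequality
\[ \sum_{e\in M}x_e+\tfrac12\sum_{j=1}^m\sum_{e\in O_j}x_e\le 1. \]
Call these the \emph{template inequalities}. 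Since $\stab{G}$ is the convex hull of the incidence vectors of stars and triangles of $H$, the theorem is equivalent to the assertion that the polytope $Q$ cut out by $x\ge 0$ and all template inequalities equals $\stab{G}$ (the non-negativity facets being the standard trivial ones).

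\emph{Strategy.} The inclusion $\stab{G}\se Q$ is immediate: a star meets a matching in at most one edge and a vertex-disjoint odd cycle in at most two edges (and likewise for a triangle, using that the cycle blocks have length at least $5$), so each template inequality is valid. For the reverse inclusion I would show that every vertex of $Q$ is the incidence vector of a star or a triangle, i.e. that $Q$ is integral with the right integer points; this is the heart of the matter, and I would attack it exactly as Edmonds attacks the matching polytope. Take a vertex $x^{*}$ of $Q$; after discarding the coordinates on which $x^{*}$ vanishes, assume $x^{*}>0$, and let $\mathcal{F}$ be the family of template structures whose inequality is tight at $x^{*}$. One then \emph{uncrosses} the members of $\mathcal{F}$: using that these matching-and-odd-cycle templates live inside the root graph $H$ of a complement of a line graph, I would argue that a minimal affinely independent tight subfamily can be chosen nested (laminar) on the vertex sets of $H$, so that the corresponding equality system becomes triangular and its unique solution is forced to be $0/1$, hence a single star or triangle.

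\emph{Main obstacle and loose ends.} The \textbf{crux} is the uncrossing step. Unlike ordinary matching, the tight constraints mix $1$-coefficients (the clique/matching part) with $\tfrac12$-coefficients (the antihole/odd-cycle part), so the classical exchange argument ``the intersection and union of two tight odd sets are again tight and odd'' must be replaced by one that tracks how stars and triangles interact with two completely joined odd cycles. Here I would rely on the structure of $\cl{H}$—equivalently on $H$—to show that two crossing templates can always be replaced by nested ones without destroying tightness or lowering the rank of the tight system, which is what makes the final triangular solve integral. Two smaller points remain: the triangle case $C_3$ must be treated separately, since $\comp{C_3}$ is not a genuine antihole and triangles therefore occur only on the stable-set side and never as a facet block; and once $Q=\stab{G}$ is established one must check that each template inequality really is facet-defining, which is routine by exhibiting enough affinely independent tight stars and triangles.
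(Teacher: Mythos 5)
First, a point of reference: the paper does not prove this statement at all. Theorem \ref{shepherd} is imported verbatim from Shepherd \cite{Shepherd1995}, where it is obtained not by a direct Edmonds-style polyhedral argument but by specializing a more general description of stable set polytopes of near-bipartite graphs (complements of line graphs are near-bipartite: the non-neighbors of a vertex $uv$ of $\cl{H}$ split into the two stable sets of edges through $u$ and through $v$, and in this class the relevant building blocks reduce to cliques and odd antiholes). So there is no in-paper proof to compare against, and your attempt must stand on its own. Your dictionary is essentially correct — stable sets of $G$ are stars and triangles of $H$, cliques of $G$ are matchings of $H$, set-joins become vertex-disjoint unions, and the validity check is fine — except for one slip: an induced odd antihole of $G$ corresponds to an odd \emph{circuit} of $H$, not necessarily an induced one, since a chord of a circuit $C$ of $H$ is a vertex of $G$ outside $E(C)$ and does not disturb the induced antihole on $E(C)$. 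Restricting to induced cycles of $H$ would drop constraints, enlarge your polytope $Q$, and make the converse inclusion false in general.

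The genuine gap is the one you flag yourself, and it is not a loose end but the entire theorem. Your converse direction rests on two unproven assertions: (a) that the tight template constraints at a vertex of $Q$ can be uncrossed into a laminar family without losing rank, and (b) that the resulting triangular system forces a $0/1$ solution. For (a) you exhibit no exchange lemma, and it is not even clear what ``crossing'' should mean for structures mixing a matching with pairwise completely joined odd cycles; the classical tight-odd-set uncrossing does not transfer, precisely because the constraint coefficients mix $1$'s and $\tfrac{1}{2}$'s, which is the difficulty you correctly identify and then do not resolve. For (b), triangularity of a system with entries in $\{1,\tfrac{1}{2}\}$ and right-hand sides $1$ does not by itself force integrality — back-substitution through a $\tfrac{1}{2}$-coefficient row produces fractional values unless one exploits affine independence and the support pattern quantitatively, and you never do. As it stands the proposal is a plausible program in the style of Edmonds' matching-polytope proof with its decisive step missing; a complete proof would have to either carry out that uncrossing analysis in full, or follow Shepherd's actual route through the near-bipartite machinery.
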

	
Clearly, any set-join of a clique and $\comp{C_5}$ ($=C_5$) contains $W_5$ as an induced subgraph. Since $W_5$ and odd antiholes with at least 7 vertices are not $h$-perfect, and as $ h $-perfection is closed for taking induced subgraphs (see Section \ref{defs}), Theorem \ref{shepherd} implies: for every graph $G$ which is the complement of a line graph, $G$ is $h$-perfect if and only if it does not contain $W_5$ or an odd antihole with at least 7 vertices as induced subgraphs. This means the following:

\begin{corollary}\label{small}
	For every graph $H$, the following statements are equivalent:
	\begin{itemize}
		\item [i)] $\cl{H}$ is $h$-perfect;
		\item [ii)] every odd circuit of $H$ has length at most 5, and every edge has at least one end in each circuit of length 5.
	\end{itemize}
\end{corollary}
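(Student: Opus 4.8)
The plan is to translate Theorem~\ref{shepherd} into the complement language via the correspondence between induced subgraphs of $\cl{H}$ and substructures of $H$, then show that the two forbidden induced subgraphs of $\cl{H}$ (the wheel $W_5$ and antiholes $\comp{C_n}$ with $n\geq 7$) correspond exactly to the configurations ruled out by condition ii). First I would record the observation already made just before the statement: for a complement of a line graph, $h$-perfection is equivalent to not containing $W_5$ nor an odd antihole $\comp{C_n}$ ($n\geq 7$) as an induced subgraph. Since $\cl{H}=\comp{L(H)}$, an induced subgraph of $\cl{H}$ is the complement of the line graph of a subgraph of $H$; more precisely, for a set $F\se\eds{H}$ of edges, $\cl{H}\cro{F}=\comp{L(H)\cro{F}}$, and $L(H)\cro{F}$ is the line graph of the subgraph of $H$ with edge-set $F$ (on the ends of those edges). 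Thus it suffices to characterize which edge-sets $F$ of $H$ give $L(H)\cro{F}\cong L(C_n)$ for the relevant $n$, or $L(H)\cro{F}\cong L(W_5)$.

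The key computation is therefore to identify the preimages of odd antiholes and of $W_5$ under the map $H\mapsto\cl{H}$. For the antiholes: $\comp{C_n}$ is the complement of $L$ of what? One checks that $C_n=L(C_n)$, i.e. the circuit is its own line graph, so $\comp{C_n}=\comp{L(C_n)}=\cl{C_n}$. Hence $\cl{H}$ contains $\comp{C_n}$ as an induced subgraph (for $n\geq 7$ odd) if and only if $H$ contains an induced-in-$L$ copy of $C_n$, which amounts to $H$ containing a circuit of length $n$ with $n\geq 7$ — equivalently, an odd circuit of length at least $7$. For $W_5$: one computes that $L(K_4)\cong \comp{C_6}=\Pi$ is not what we want; rather I would verify directly that $W_5\cong L(J)$ for a specific small graph $J$, and then read off the combinatorial condition. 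The cleanest route is to note that $W_5$ arises as the complement of $\comp{W_5}$, and to determine the graph $J$ on edges whose line graph is $W_5$; the relevant conclusion to extract is that $\cl{H}$ contains $W_5$ iff $H$ contains a $C_5$ together with an edge all of whose (both) incident edges to that $C_5$ fail a disjointness condition — concretely, a circuit of length $5$ together with an edge having no end on that circuit (so that, in $\cl H$, the corresponding vertices form the complement of $C_5$ plus a vertex adjacent to all five).

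Assembling these, the forbidden list ``$W_5$ or $\comp{C_n}$ ($n\geq 7$)'' translates to: $H$ has an odd circuit of length at least $7$, or $H$ has a $C_5$ plus an edge disjoint from it. Negating gives exactly ii): every odd circuit of $H$ has length at most $5$ (ruling out odd antiholes with $\geq 7$ vertices), and every edge meets every $C_5$ (ruling out $W_5$), which is the stated ``every edge has at least one end in each circuit of length $5$.'' I would present this as a short chain of equivalences, citing the pre-statement consequence of Theorem~\ref{shepherd} for the hard direction and using only the line-graph identities $C_n=L(C_n)$ and the explicit $W_5=L(\cdot)$ identification.

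The main obstacle I expect is the $W_5$ identification and its precise combinatorial translation: verifying which subgraph $J$ of $H$ yields $L(J)\cong W_5$, and confirming that the condition ``an edge disjoint from a $C_5$'' is both necessary and sufficient for $\cl{H}$ to contain an induced $W_5$ (rather than merely containing $W_5$ as a not-necessarily-induced subgraph). Care is needed because an induced $W_5$ in $\cl{H}$ corresponds to an \emph{induced} subgraph of $L(H)$, i.e. to an edge-set $F$ that is \emph{closed} under the adjacency of $L(H)$ in the required pattern; so I must check that no extra edges of $H$ among the six chosen edges spoil the isomorphism. The antihole direction, by contrast, is routine once the identity $\comp{C_n}=\cl{C_n}$ is in hand.
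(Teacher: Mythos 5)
Your proposal is correct and takes essentially the same route as the paper: the paper likewise derives from Theorem \ref{shepherd} that a complement of a line graph is $h$-perfect if and only if it contains no induced $W_5$ and no odd antihole with at least 7 vertices, and then obtains the corollary by exactly the translation you spell out (induced subgraphs of $\cl{H}$ correspond to edge-subsets of $H$, $L(C_n)=C_n$ gives the antiholes, and $\comp{W_5}=C_5\cup K_1=L(C_5\cup K_2)$ gives the ``edge disjoint from a $C_5$'' configuration), with your closedness check on the six chosen edges being the only detail the paper leaves implicit. One discarded aside in your write-up is factually wrong but harmless: $L(K_4)$ is the octahedron $\comp{3K_2}$, not $\comp{C_6}=\Pi$.
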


Since $W_5$ and $\Pi$ obviously satisfy ii) and have no matching of cardinality 4, this corollary shows that $\cl{W_5}$ and $\cl{\Pi}$ are $t$-perfect. Observe that every $t$-perfect graph $G$ which is the complement of a line graph has $\chi(G)\leq 4$. Indeed, let $v\in\vts{G}$. Since $G$ is $t$-perfect, it does not contain an odd wheel as an induced subgraph. Thus, the subgraph of $G$ induced by the neighbors of $v$ is bipartite. Besides, as $\comp{G}$ is a line graph, the subgraph of $G$ induced by the non-neighbors of $v$ is also bipartite. Combining respective 2-colorings of these two bipartite graphs directly yields a 4-coloring of $G$.

Seymour and Laurent observed that $\chi(\cl{\Pi})=4$  \cite[pg 1207]{Schrijver2003} (this follows from the same type of argument as below). 
We now prove that $\cl{W_5}$ is another $t$-perfect graph of chromatic number 4:

\begin{proposition}\label{w5}
	$\chi(\cl{W_5})=4$. 
\end{proposition}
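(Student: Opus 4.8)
The inequality $\chi(\cl{W_5})\le 4$ requires no new work: Corollary \ref{small} shows that $\cl{W_5}$ is $h$-perfect, hence (being $K_4$-free, as $W_5$ has no matching of cardinality $4$) $t$-perfect, and the observation made just before this proposition shows that every $t$-perfect complement of a line graph is $4$-colorable. The entire content of the statement is therefore the lower bound $\chi(\cl{W_5})\ge 4$, and the plan is to prove it in the edge-covering formulation. Since $W_5$ is simple, $\chi(\cl{W_5})=\gamma(W_5)$, so it suffices to show that no three stars and triangles cover all of $\eds{W_5}$.

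First I would record the relevant structure of $W_5$. Write $c$ for the hub and $r_1,\ldots,r_5$ for the rim vertices in cyclic order; then $\eds{W_5}$ splits into the five \emph{spokes} $cr_i$ and the five \emph{rim edges} $r_ir_{i+1}$. Because the rim is an induced $C_5$ and hence triangle-free, every triangle of $W_5$ uses $c$ together with two consecutive rim vertices, so each triangle covers exactly two spokes and one rim edge. Every star is contained in a full star: the full star at $c$ consists of the five spokes and covers no rim edge, while the full star at a rim vertex covers at most two rim edges and at most one spoke. These bounds, which hold for arbitrary (not necessarily full) stars, are the only facts needed.

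Now suppose for contradiction that three stars and triangles cover $\eds{W_5}$, and sort them into $h$ stars centered at $c$, $s$ stars centered at rim vertices, and $t$ triangles, so that $h+s+t=3$. Counting coverage of the five rim edges gives $2s+t\ge 5$, since stars at $c$ contribute nothing, rim-stars at most two each, and triangles exactly one each; together with $s+t\le 3$ this forces $h=0$ and leaves only $(s,t)=(3,0)$ or $(s,t)=(2,1)$. Both are then eliminated by counting coverage of the five spokes: three rim-stars cover at most three spokes, while two rim-stars and one triangle cover at most $2+2=4$ spokes, so in each case at least one spoke is left uncovered, a contradiction. This yields $\gamma(W_5)\ge 4$ and completes the proof. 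The one point that requires care is that neither count on its own rules out a $3$-cover: the argument closes precisely because the rim-edge count first eliminates hub-stars and pins down $(s,t)$, after which the spoke count becomes decisive.
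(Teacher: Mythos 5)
Your proof is correct, but both halves take a different route from the paper's. For the upper bound, the paper exhibits an explicit \stc of cardinality $4$ directly: the edge set of any triangle $T$ of $W_5$ together with the full stars at the three vertices outside $T$. You instead invoke the $t$-perfection of $\cl{W_5}$ (via Corollary \ref{small}) and the observation, made just before the proposition, that every $t$-perfect complement of a line graph is $4$-colorable; this is legitimate and non-circular, since both ingredients are established before Proposition \ref{w5}, though the paper's construction is more self-contained and also exhibits an optimal cover. For the lower bound, the paper argues structurally: it asserts (leaving the verification to the reader) that $W_5$ cannot be covered by three stars, and then observes that for any triangle $T$ the graph $W_5-\eds{T}$ has a matching of cardinality $3$, so that --- since each star or triangle contains at most one edge of any matching --- a cover containing a triangle must have at least $1+3=4$ elements. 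Your double count over the two edge classes (the rim-edge count $2s+t\geq 5$ together with $s+t\leq 3$ forces $h=0$ and $(s,t)\in\{(3,0),(2,1)\}$, after which the spoke count kills both cases) replaces both of these steps by a single explicit argument with nothing left to the reader. One minor point worth making explicit in your write-up: a star such as a single spoke $cr_i$ is a subset of two different full stars, so the classification into hub-stars and rim-stars is not canonical; but either choice satisfies your coverage bounds, so the count is unaffected. On balance, the paper's matching argument is shorter and rests on a principle that generalizes (any \stc of a graph $H$ has at least $\nu(H)$ elements), whereas your counting argument is entirely elementary and verification-free.
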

	\begin{proof}
	Since $W_5$ is simple, we need only to show that $\gamma(W_5)=4$ (see Section \ref{defs}).
	Let $T$ be any circuit of length 3 of $W_5$. Clearly, $\set{\eds{T}}\cup\set{\delta_{W_5}(v)\colon v\notin \vts{T}}$ is an \stc of $W_5$, hence $\gamma(W_5)\leq 4$. 
	
	It is straightforward to check that $W_5$ cannot be covered by 3 stars. Thus, each \stc of $W_5$ has cardinality at least 4 or  contains a triangle. Besides, for each triangle $T$ of $W_5$, the graph $W_5-\eds{T}$ has a matching of cardinality 3 and hence it cannot have an \stc of cardinality less than 3. This shows $\gamma(W_5)\geq 4$ and the proposition.
	\end{proof}

The following two results are the main ingredients of our proof of Theorem \ref{colcltp}. A graph $G$ is \emph{factor-critical} if for every $v\in\vts{G}$, $G-v$ has a perfect matching.  

\begin{theorem}[Cunningham, Marsh \cite{Cunningham1978}]\label{cunmarsh}
	For every graph $H$, $\nu(H)$ is the minimum of
	\[ |\mathcal{U}|+\sum_{F\in \mathcal{F}}\frac{|V(F)|-1}{2} \]
	over all pairs of a set $\mathcal{U}$ of stars, and a set $\mathcal{F}$ of 2-connected factor-critical induced  subgraphs of $H$, such that every edge of $H$ belongs to a star in $\mathcal{U}$ or to $E(F)$ for some $F\in\mathcal{F}$.
\end{theorem}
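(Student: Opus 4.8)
The statement is a min--max identity, so my plan is to prove the two inequalities separately. The easy direction $\nu(H)\le\min(\cdots)$ is a counting argument: fix a maximum matching $M$ and any admissible pair $(\mathcal U,\mathcal F)$, and assign each edge of $M$ to a member of $\mathcal U\cup\mathcal F$ covering it. A star is contained in some full star $\delta_H(v)$, so it receives at most one edge of $M$; a $2$-connected factor-critical $F$ has an odd number of vertices, so $M\cap E(F)$ is a matching of $F$ of size at most $\frac{|V(F)|-1}{2}$. Summing over the cover gives $\nu(H)=|M|\le |\mathcal U|+\sum_{F\in\mathcal F}\frac{|V(F)|-1}{2}$, hence $\nu(H)$ is at most the minimum. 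The whole content is therefore to \emph{exhibit} an admissible pair of value exactly $\nu(H)$, which I would do by induction on $|V(H)|$, treating the components of a disconnected $H$ separately since both sides are additive over components.

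\textbf{Inductive step, essential vertex.} Suppose $H$ is connected and has a vertex $v$ lying in \emph{every} maximum matching. Then $\nu(H-v)=\nu(H)-1$, since otherwise a maximum matching of $H-v$ would be a maximum matching of $H$ missing $v$. By induction $H-v$ admits a cover of value $\nu(H)-1$; as $H-v$ is an induced subgraph of $H$, all its stars and factor-critical pieces remain legal for $H$, and adjoining the full star $\delta_H(v)$ (which covers exactly the edges at $v$) yields a cover of $H$ of value $\nu(H)$.

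\textbf{Inductive step, no essential vertex.} Otherwise every vertex of the connected graph $H$ is missed by some maximum matching, and here I would invoke the standard fact (a consequence of the Gallai--Edmonds structure theorem) that such an $H$ is \emph{factor-critical}; in particular $\nu(H)=\frac{|V(H)|-1}{2}$. If $H$ is $2$-connected, then $\mathcal F=\{H\}$, $\mathcal U=\varnothing$ is the desired cover. If $H$ has a cut vertex $c$, let $G_1,\dots,G_k$ (with $k\ge 2$) be the components of $H-c$ and put $H_i:=H[V(G_i)\cup\{c\}]$. A maximum matching of $H$ missing $c$ is a perfect matching of $H-c$, so every $G_i$ is even and every $H_i$ has odd order; a short parity argument (when deleting a vertex of $G_i$, the perfect matching of $H-u$ is forced to match $c$ into $G_i$) then shows that each $H_i$ is itself factor-critical. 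The edge sets $E(H_i)$ partition $E(H)$, each $H_i$ is a proper induced subgraph, and
\[ \sum_{i=1}^k \frac{|V(H_i)|-1}{2}=\sum_{i=1}^k\frac{|V(G_i)|}{2}=\frac{|V(H)|-1}{2}=\nu(H), \]
so applying the induction hypothesis to each $H_i$ and taking the union of the resulting covers produces a cover of $H$ of value $\nu(H)$.

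\textbf{Main obstacle.} The routine parts are the counting inequality and the bookkeeping of the recursion; the crux is the matching-theoretic structure invoked twice, namely that a connected graph all of whose vertices are inessential is factor-critical, and that a factor-critical graph with a cut vertex splits into factor-critical ``branch'' subgraphs whose matching numbers add up correctly. Both are packaged in the Gallai--Edmonds decomposition, which I would either cite or establish through alternating-path arguments. An alternative, entirely polyhedral route is to observe that the claimed quantity is the LP dual of $\max\{\mathbf 1^\top x : x\in \text{matching polytope of }H\}$ under Edmonds' description of that polytope; the theorem is then exactly the total dual integrality of Edmonds' odd-set system, which Cunningham and Marsh prove by an uncrossing argument that makes the optimal dual integral with laminar, factor-critical support. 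I find the combinatorial induction cleaner to present.
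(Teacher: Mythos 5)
Your proof is correct, but note that the paper contains no proof of this statement to compare against: Theorem 4 is quoted as a black box from Cunningham and Marsh \cite{Cunningham1978} and used only as an ingredient in the proof of Theorem 1. Measured against the original source instead, your combinatorial induction is a genuinely different and more elementary route: Cunningham and Marsh derive the formula from a primal algorithm establishing total dual integrality of Edmonds' odd-set description of the matching polytope --- essentially the uncrossing argument with laminar, factor-critical support that you sketch as the alternative in your last paragraph --- whereas your argument is the direct induction. All the steps check out. The weak direction is the right counting: a star lies inside some $\delta_H(v)$ and so meets a matching in at most one edge, and a factor-critical $F$ has odd order, so $|M\cap E(F)|\le \frac{|V(F)|-1}{2}$. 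In the essential-vertex case, $\nu(H-v)=\nu(H)-1$ holds for exactly the reason you give, and stars and $2$-connected factor-critical induced subgraphs of $H-v$ are indeed legal for $H$ by transitivity of the induced-subgraph relation. In the remaining case, your parity argument at a cut vertex $c$ is sound: in a perfect matching of $H-u$ with $u\in V(G_i)$, the vertex $c$ cannot be matched into $G_j$ for $j\ne i$, since $|V(G_j)|$ is even and its vertices can only be matched inside $G_j$ or to $c$; hence each branch $H_i=H\left[V(G_i)\cup\{c\}\right]$ is factor-critical, $E(H)$ is partitioned by the $E(H_i)$, and the values add up as you compute. The one substantive external ingredient --- that a connected graph all of whose vertices are inessential is factor-critical (Gallai's lemma, or the Gallai--Edmonds decomposition) --- you flag explicitly, which is fair, though a complete write-up should prove or cite it. Two trivial loose ends worth one line each: the base case of edgeless graphs (in particular $K_1$, which is factor-critical but not $2$-connected, is handled by $\mathcal U=\mathcal F=\varnothing$ since there is nothing to cover), and the fact that the paper's graphs may carry parallel edges, which is harmless here because your constructions use full stars, induced subgraphs (which retain all parallel edges), and the branch decomposition, each of which covers every parallel copy of an edge it covers; the buyoff of your route over the polyhedral one is precisely that it needs nothing beyond classical matching theory.
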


We say that a graph $G$ has an \emph{odd ear-decomposition} if there exists a sequence $(P_0,P_1,\ldots,P_k)$ of an odd circuit $P_0$ and odd paths $P_1,\ldots,P_k$ of $G$ such that $\cup_{i=0}^{k}P_i=G$ and for each $i\geq 1$, only the two ends of $P_i$ belong to $P_0\cup\ldots \cup P_{i-1}$ (this type of ear-decomposition is usually called \emph{open} in the literature). 

\begin{theorem}[Lov\'asz \cite{Lovasz1972a}]\label{lovasz}
	A 2-connected graph is factor-critical if and only if it has an odd ear-decomposition.
\end{theorem}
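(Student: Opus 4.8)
The plan is to treat the two implications of the equivalence separately, disposing of the easier direction first and then concentrating on the substantial one.

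For the direction asserting that an odd ear-decomposition implies factor-criticality, I would argue by induction on the number of ears $k$. The base case $P_0$ is an odd circuit: deleting any vertex leaves an even path, which has a perfect matching. For the inductive step, write $G=G'\cup P_k$ with $G'=P_0\cup\cdots\cup P_{k-1}$ factor-critical by induction, where $P_k$ is an odd path with ends $a,b\in V(G')$ and internal vertices $w_1,\ldots,w_{2t}$ outside $V(G')$ (an odd path has an even number of internal vertices). To expose a vertex $x$: if $x\in V(G')$, combine a perfect matching of $G'-x$ with the pairing $w_1w_2,w_3w_4,\ldots,w_{2t-1}w_{2t}$; if $x=w_j$ is internal, the path $P_k-w_j$ splits into two subpaths, and the parity of $j$ determines which of the two ends $a,b$ is left unmatched by the ear, so I complete the matching with a perfect matching of $G'$ missing that end (which exists since $G'$ is factor-critical). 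Length-one ears (chords) are immediate, since adding an edge never destroys factor-criticality.

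For the converse — a $2$-connected factor-critical graph has an odd ear-decomposition — I would fix a near-perfect matching $M$ missing a vertex $r$ (such $M$ exists by factor-criticality) and build an \emph{$M$-alternating} open ear-decomposition, starting from an $M$-alternating odd circuit through $r$ and maintaining the invariant that the current subgraph $H$ contains $r$ and that $M\cap E(H)$ is a near-perfect matching of $H$ missing exactly $r$. This invariant forces $M$ to match the vertices of $V(G)\setminus V(H)$ among themselves. The advantage of insisting on $M$-alternating ears is that oddness then comes for free: an ear attached to vertices of $H$ already covered by $M$ must begin and end with non-matching edges and alternate in between, so its length is odd.

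The heart of the argument is the inductive step: given a proper subgraph $H$ obeying the invariant, produce the next $M$-alternating (hence odd) ear. If $H$ is not induced, a missing chord is a length-one odd ear. Otherwise I invoke $2$-connectivity: a component $C$ of $G-V(H)$ must attach to at least two vertices of $V(H)$ (a single attachment would be a cut-vertex), and by the invariant $M$ restricts to a perfect matching of $C$; I would then trace an $M$-alternating path through $C$ joining two attachment vertices. I expect the \emph{parity control in this last step} to be the main obstacle: one must guarantee that the alternating search inside $C$ reaches a boundary vertex after an even number of steps, so that the return edge to $V(H)$ is non-matching and the ear is genuinely odd. This is exactly the point where both hypotheses are indispensable — the two-or-more attachments come from $2$-connectivity, while factor-criticality is what prevents the alternating reachability from being consistently of the wrong parity (which would otherwise force a bipartite-like structure on the relevant part of $G$ and, via a Tutte-type deficiency count, exhibit a vertex whose deletion admits no perfect matching, contradicting factor-criticality).
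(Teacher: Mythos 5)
First, note that the paper does not prove this theorem at all: it is quoted with a citation to Lov\'asz, so your attempt can only be measured against the known proof, not an in-paper one. Your first direction (odd ear-decomposition $\Rightarrow$ factor-critical) is essentially correct, with one slip of wording that as written would break the argument: when $x=w_j$, the $2t-1$ surviving internal vertices form an odd set, so any matching of the ear covering them must also cover \emph{exactly one} end, say $b$; you must then complete with a perfect matching of $G'-b$, i.e.\ of $G'$ minus the end \emph{covered} by the ear. You wrote ``a perfect matching of $G'$ missing the end left unmatched by the ear,'' which would leave that end uncovered in $G-w_j$ altogether. This is easily repaired.

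The genuine gap is in the converse, and it sits exactly where you yourself flag it: the next ear is never constructed. Observing that a component $C$ of $G-V(H)$ is perfectly matched by $M$ and has at least two attachment vertices does not produce an $M$-alternating path through $C$ that (a) leaves and re-enters $H$ by non-matching edges, so that the ear is odd, and (b) ends at a vertex of $H$ \emph{distinct} from where it started, so that the ear is open; your appeal to a ``Tutte-type deficiency count'' is a gesture, not an argument, and this step is precisely the crux of Lov\'asz's theorem. The standard repair uses factor-criticality globally rather than an alternating search inside $C$: take an edge $xy$ with $x\in V(H)$, $y\notin V(H)$, let $M_y$ be a perfect matching of $G-y$, and consider $M\,\triangle\, M_y$; the component containing $y$ is an alternating path ending at $r$, which you truncate at its first vertex $z$ in $V(H)$. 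Since your invariant forces every edge between $H$ and its complement to be a non-matching edge, the segment from $y$ to $z$ has even length, and $xy$ together with it is an odd ear. Even then, one must still invoke $2$-connectivity (two independent attachment edges, or two vertex-disjoint paths) to rule out $z=x$ and keep the ear open---a point your sketch does not address, since ``$\geq 2$ attachment vertices'' does not by itself let the alternating path exit at a different vertex. Finally, your starting point---an $M$-alternating odd circuit through $r$---is also asserted without proof, though it follows by the same symmetric-difference trick applied to $M$ and a perfect matching of $G-u$ for a neighbor $u$ of $r$. In short: the architecture is the right one, but the load-bearing step is missing.
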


We now prove Theorem 1:

\begin{proof}[Proof of Theorem 1]
	Let $H$ be a graph such that $G:=\cl{H}$ is $t$-perfect and $G$ does not contain $\cl{\Pi}$ nor $\cl{W_5}$ as an induced subgraph. In other words, $H$ does not contain $\Pi$ or $W_5$ as a subgraph. We will show that $\chi(G)\leq 3$ and, clearly, this will prove Theorem \ref{colcltp} (since $\cl{\Pi}$ and $\cl{W_5}$ both have chromatic number 4 (see Proposition \ref{w5}) and are not subgraphs of each other).

	We may assume that $H$ is simple: if $e$ is a multiple edge of $H$, then $e$ and any other edge parallel to $e$ define two non-adjacent vertices of $G$ which have exactly the same neighbors; it is easy to check that $\chi(\cl{H})=\chi(\cl{H-e})$, and we can conclude using induction. Obviously, we may also suppose that $H$ has no isolated vertex.
	
	Assuming that $H$ is simple, we have $\chi(G)=\gamma(H)$ and thus we need only to prove that $\gamma(H)\leq 3$ (see Section \ref{defs}). 
	
	Since $G$ is $t$-perfect, it does not contain $K_4$ and $\nu(H)=\omega(G)\leq 3$. By Theorem \ref{cunmarsh}, there exist a set $\mathcal{U}$ of stars and a set $\mathcal{F}$ of 2-connected factor-critical induced subgraphs of $H$ such that every edge of $H$ belongs to either a star in $\mathcal{U}$ or to a $E(F)$ for some $F\in\mathcal{F}$, and:
	\begin{equation}\label{CM}
	\nu(H)=|\mathcal{U}|+\sum_{F\in\mathcal{F}}\frac{|V(F)|-1}{2}.
	\end{equation}
	Clearly, if $\mathcal{F}$ contains only circuits of length 3, then $\mathcal{U}\cup \mathcal{F}$ gives an \stc of $H$ and $\gamma(H)\leq\nu(H)\leq 3$; we are done. Thus, we may assume that $\mathcal{F}$ contains a 2-connected factor-critical induced subgraph $F$ which is not a circuit of length 3. We show:
	
	\begin{equation}\label{fullC5}
		\text{\emph{$F$ has a spanning circuit of length 5.}}
	\end{equation}
	By Theorem \ref{lovasz}, $F$ has an odd ear-decomposition $(P_0,P_1,\ldots,P_k)$.
	Suppose first that $P_0$ is a circuit of length at least 5. By Corollary \ref{small}, since $G$ is $t$-perfect, $P_0$ has length exactly 5 and every edge of $H$ has at least one end in $P_0$. Hence, $P_1,\ldots,P_k$ (if they exist) must have length 1. Consequently, $P_1,\ldots,P_k$ have both their ends in $P_0$, thus $P_0$ is a spanning subgraph of $F$ and the conclusion follows.
	
	Suppose now that $P_0$ has length 3. Since $H$ is simple (by assumption) and not a circuit of length 3, we must have $k\geq 1$, and $P_1$ has length at least 3. By Corollary \ref{small}, since $G$ is $t$-perfect, $H$ does not contain an odd circuit of length at least 7. Thus, $P_1$ has length exactly 3 and  $P_0\cup P_1$ contains a circuit of length 5. Hence, as above, $P_2,\ldots,P_k$ must have length 1 and have both their ends in $P_0\cup P_1$. Therefore, any circuit of length 5 of $P_0\cup P_1$ is a spanning subgraph of $F$, as required. This proves \eqref{fullC5}.
	
	Clearly, if $\mathcal{U}\cup \mathcal{F}=\set{F}$ then $H$ is a subgraph of $K_5$ and $\gamma(H)\leq\gamma(K_5)\leq 3$ (as $\gamma$ is non-decreasing  for the subgraph relation). 
	
	Hence, we may assume that $\mathcal{U}\cup\mathcal{F}\neq\set{F}$.	This implies that $H$ has an edge $e=uv$ which is not an edge of $F$. By Corollary \ref{small}, since $G$ is $t$-perfect, $e$ must have at least one end, say $u$, in $F$. Put $u_1:=u$ and let $C:=(u_1,\ldots,u_5)$ be a spanning $C_5$ of $F$ (which exists by \eqref{fullC5}). Therefore, $v\notin V(F)$. We now show :
	
	\begin{equation}\label{unique}
	V(H)=V(F)\cup\set{v}.
	\end{equation}
	Since $|V(F)|=5$ and as $\set{F}\subsetneq\mathcal{U}\cup \mathcal{F}$, Equation \eqref{CM} shows that $\nu(H)\geq 3$. Since $G$ is $t$-perfect, $\nu(H)\leq 3$ and thus $\nu(H)=3$. Then, by Equation \eqref{CM}, $\mathcal{U}\cup\mathcal{F}=\set{F,U}$, where $U$ must be either a star or a circuit of length 3. 
	
	It is straightforward to check that each maximum matching $M$ of $H$ (that is $|M|=3$) has exactly 2 edges in the set $\eds{F}$ and exactly 1 edge in $U$ (if $U$ is a circuit of length 3, we mean that $e$ is an edge of this circuit);   this is just the complementary slackness following from Theorem \ref{cunmarsh}. In particular, the matching $\set{e,u_2u_3,u_4u_5}$ shows that $e$ belongs to $U$.
	As a consequence, if $U$ is a star then $U\se\delta_H(u_1)$ or $U\se\delta_H(v)$. And otherwise, if $U$ is a circuit of length 3, then it has exactly two vertices in $F$ (since Corollary \ref{small} shows that each edge of $H$ must have at least one end in $C$, as $G$ is $t$-perfect).
	
	Since every edge of $H$ belongs either to $E(F)$ or to $U$, this  implies overall that every edge of $H$ which is not an edge of $F$ is incident to at least one of $u_1$ or $v$. The claim \eqref{unique} then follows, as every edge has at least one end in $C$ and as $H$ has no isolated vertex by assumption.
	
	We are now ready to build an \stc of $H$ of cardinality 3, thus showing that $\gamma(H)\leq 3$ and concluding the proof. Our construction depends only on the degree of $v$ in $H$.
	Since $H$ has no isolated vertex, $d_H(v)\geq 1$. As $H$ does not contain $W_5$ (as a subgraph), we also have $d_H(v)\leq 4$.
	
	\vspace{5pt}
	\emph{{\textsc Case 1:}}
		$d_H(v)=1$. 
		If $u_2u_4\notin\eds{H}$, then we take the full stars at $u_1$, $u_3$ and $u_5$ to form a convenient \stc of $H$. Otherwise, we simply replace the full star at $u_3$ by the triangle $u_2u_3u_4$ to get an \stc of cardinality 3.
	
	\vspace{5pt}
	\emph{{\textsc Case 2:}}
		$d_H(v)=2$.
		First, suppose that the neighbors of $v$  are consecutive on $C$. Without loss of generality, we may assume that $N_G(v)=\set{u_1,u_2}$. If $u_3u_5\notin\eds{H}$, then the full stars at $u_1$, $u_2$ and $u_4$ form a convenient \stc. Otherwise, we replace the full star at $u_4$ with the triangle $u_3u_4u_5$.
		
		Now, suppose that the neighbors of $v$  are not consecutive on $C$.
		By symmetry, we may assume that $N_H(v)=\set{u_1,u_3}$. If both $u_2u_5$ and $u_2u_4$ are edges of $H$, then the triangle $u_2u_4u_5$ and the full stars at $u_1$, $u_3$ form an  \stc cover with 3 elements. Hence, suppose that one of these two edges do not belong to $H$. By symmetry again, we may assume that $u_2u_5\notin\eds{H}$. Then, the full stars at $u_1$, $u_3$ and $u_4$ form an \stc of $H$ as required.
		
	\vspace{5pt}	
	\emph{{\textsc Case 3:}}
		$d_H(v)=3$. 
		Similarly, we first suppose that the neighbors of $v$ are consecutive on $C$. Without loss of generality, we have $N_H(v)=\set{u_1,u_2,u_5}$. Since $H$ does not contain $W_5$, at most one of $u_1u_3$ and $u_1u_4$ is an edge of $H$. By symmetry, we may assume that $u_1u_4\notin\eds{H}$.
		If $u_2u_4\notin\eds{G}$, then the full stars at $u_3$ and $u_5$, and the triangle $vu_1u_2$ form an \stc of cardinality 3. Otherwise, since $H$ does not contain $\Pi$, we must have $u_1u_3\notin\eds{G}$. Therefore, the full star at $u_5$ and the triangles $vu_1u_2$ and $u_2u_3u_4$ form an \stc of cardinality 3.

		Now, suppose that the neighbors of $v$ are not consecutive on $C$. Without loss of generality, we may assume that $N_H(v)=\set{u_1,u_3,u_4}$. Since $H$ does not contain $\Pi$, we have $u_2u_5\notin\eds{H}$. Then, the full stars at $u_1$, $u_3$, $u_4$ form an \stc of cardinality 3.
	
	\vspace{5pt}	
	\emph{{\textsc Case 4:}}
		$d_H(v)=4$.
		Without loss of generality, we may assume that $u_2$ is the vertex of $C$ which is not a neighbor of $v$. Since $H$ does not contain $\Pi$, both $u_2u_4$ and $u_2u_5$ cannot be edges of $H$. Hence, the triangle $vu_4u_5$ together with the two full stars at $u_1$ and $u_3$ form an \stc as required.	
\end{proof}

This proof directly yields a polynomial-time algorithm finding a 3-coloring or showing $\cl{\Pi}$ or $\cl{W_5}$ as an induced subgraph, for every $t$-perfect graph whose complement is a line graph.

The following lemma of Seb\H{o} is the key-argument in showing Corollary \ref{colclhp} as a consequence of Theorem \ref{colcltp}. It is stated and proved in \cite{Bruhn2012} (with the additional assumption that the graph is claw-free, yet this hypothesis is not used in the proof, hence the result holds in fact for every $h$-perfect graph):

\begin{lemma}[Seb\H{o}, Lemma 26 in \cite{Bruhn2012}]\label{sebo}
	If $G$ is an $h$-perfect graph and $\omega(G)\geq 3$, then $G$ has a stable set intersecting every  clique of maximum cardinality.
\end{lemma}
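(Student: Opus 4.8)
The plan is to produce the desired stable set by exhibiting a single explicit fractional point of $\stab{G}$ and then reading a stable set off of any of its convex decompositions. Concretely, set $\omega:=\omega(G)$ and consider the barycentric point $\tfrac1\omega\mathbf{1}\in\rset^{\vts{G}}$, where $\mathbf{1}$ is the all-ones vector. First I would check that $\tfrac1\omega\mathbf{1}\in\mathsf{HSTAB}(G)$. Non-negativity is immediate. For a clique $K$ we have $\sum_{v\in K}\tfrac1\omega=\tfrac{|K|}{\omega}\le 1$ since $|K|\le\omega$, with equality exactly when $K$ is a maximum clique. For an odd circuit $C$ the inequality $\sum_{v\in\vts{C}}\tfrac1\omega\le\tfrac{|\vts{C}|-1}{2}$ is equivalent to $\tfrac{2|\vts{C}|}{|\vts{C}|-1}\le\omega$; as the left-hand side is at most $\tfrac{2\cdot 5}{5-1}=\tfrac52$ for every odd circuit of length at least $5$ (and reduces to the clique bound for triangles), it is dominated by $\omega$ precisely because $\omega\ge 3$. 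This is the only place where the hypothesis $\omega(G)\ge 3$ is used.

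By $h$-perfection, $\mathsf{HSTAB}(G)=\stab{G}$, so $\tfrac1\omega\mathbf{1}$ lies in the stable set polytope and can be written as a convex combination $\tfrac1\omega\mathbf{1}=\sum_i\lambda_i\chi^{S_i}$ with $\lambda_i>0$, $\sum_i\lambda_i=1$, and each $S_i$ a stable set of $G$. Now fix any maximum clique $K$. Summing the identity over $v\in K$ yields
\[
1=\sum_{v\in K}\tfrac1\omega=\sum_i\lambda_i\,|S_i\cap K|\le\sum_i\lambda_i=1,
\]
where the inequality uses $|S_i\cap K|\le 1$ (a stable set meets a clique in at most one vertex). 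Equality throughout forces $|S_i\cap K|=1$ for every $i$ in the support, and this conclusion holds simultaneously for every maximum clique $K$.

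It follows that any single $S_i$ with $\lambda_i>0$ is a stable set meeting every maximum clique, which is exactly the assertion. The argument is short, and its only real content is the membership check of the first paragraph; I expect the sole delicate point to be confirming the odd-circuit family, i.e. that $\tfrac{2|\vts{C}|}{|\vts{C}|-1}\le\omega$ holds for all odd circuits once $\omega\ge 3$ — equivalently that the worst case is the length-$5$ circuit, which requires $\omega\ge\tfrac52$ — together with the observation that triangles contribute only the clique constraint already handled. One could alternatively route the membership check through Proposition \ref{fcn}, which gives $\chi_f(G)=\omega$ under $\omega\ge 3$, but the direct polyhedral verification above is self-contained.
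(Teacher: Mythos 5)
Your proof is correct, and it is essentially the known proof: the paper does not prove this lemma itself but imports it from Bruhn and Stein \cite{Bruhn2012} (where it is attributed to Seb\H{o}), and the argument there is the same barycentric one you give. Checking $\tfrac{1}{\omega}\mathbf{1}\in\mathsf{HSTAB}(G)$ — where $\omega\ge 3$ is exactly what handles triangles and dominates $\tfrac{2|\vts{C}|}{|\vts{C}|-1}\le\tfrac{5}{2}$ for odd circuits of length at least $5$ — and then forcing $|S_i\cap K|=1$ on the support of any convex decomposition via tightness of every maximum-clique inequality is precisely the intended route (your closing remark that one could instead invoke Proposition \ref{fcn} reflects how \cite{Bruhn2012} packages the same computation).
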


\begin{proof}[Proof of Corollary \ref{colclhp}]
	Since $\cl{\Pi}$ and $\cl{W_5}$ are $t$-perfect, Proposition \ref{fcn} shows that their fractional chromatic number is at most  3. The implication i)$\Rightarrow$ii) follows, since they are not 3-colorable.
	
	Conversely, let $G$ be an $h$-perfect graph which is the complement of a line graph and which does not contain $\cl{\Pi}$ and $\cl{W_5}$ as induced subgraphs. 
	We need only to prove that $\chi(G)\leq \ceil{\chi_f(G)}$ (since the other inequality always holds).
	
	First, suppose that $\omega(G)\leq 3$. If $G$ is bipartite, then $\chi_f(G)=2=\chi(G)$ and we are done, thus we may assume that $G$ is not bipartite. Hence, $\chi_f(G)>2$. Besides, since $G$ is $t$-perfect, Proposition \ref{fcn} implies that $\chi_f(G)\leq 3$. Therefore, $\ceil{\chi_f(G)}=3$. By Theorem \ref{colcltp}, $\chi(G)=3$ and the conclusion follows.
	
	We conclude by induction on $\omega$: if $\omega(G)\geq 4$ then, by Lemma \ref{sebo}, there exists a stable set $S$ such that $\omega(G-S)=\omega(G)-1$. It is straightforward to check that Proposition \ref{fcn} implies the equalities $\chi_f(G-S)=\omega(G-S)$ and $\chi_f(G)=\omega(G)$ (as both $G$ and $G-S$ have a clique of cardinality 3). Therefore, by induction:
	\[ \chi(G) \leq \chi(G-S)+1= \chi_f(G-S)+1=\omega(G-S)+1=\omega(G)=\chi_f(G), \]
	as required.
\end{proof}

Bruhn and Stein \cite{Bruhn2012} observed that a stable set as in Lemma \ref{sebo} can be found in polynomial-time (through the Ellipsoid method \cite{Groetschel1988}). It is not surprising that this can be carried out in a much simpler way for the special case of $h$-perfect graphs whose complement is a line graph. Indeed, this is equivalent to find either  a star or a triangle intersecting every maximum matching, in a graph satisfying conditions (i) and (ii) of Corollary \ref{small}. 
Clearly, any star or triangle in an optimal cover given by Theorem \ref{cunmarsh} is convenient (by complementary slackness). On the other hand, if such an optimal cover does not contain any star or triangle, then it is easy to find a convenient star or circuit directly.

Therefore, this and the proof of Corollary \ref{colclhp} show that for every $h$-perfect graph $G$ which is the complement of a line graph, a coloring of $G$ using $\ceil{\chi_f(G)}$ colors or an induced subgraph of $G$ isomorphic to $\cl{\Pi}$ or $\cl{W_5}$ can be found in polynomial-time.

We now show a polynomial-time algorithm testing $h$-perfection in the class of complements of line graphs. 

\begin{proposition}
	$H$-perfection can be tested in polynomial-time in the class of graphs whose complement is a line graph.
\end{proposition}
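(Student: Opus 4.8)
The plan is to reduce the test, via Corollary \ref{small}, to two combinatorial conditions on a root graph of $\comp{G}$, and to verify each of them in polynomial time. First I would recognize $\comp{G}$ as a line graph and reconstruct a graph $H$ with $L(H)=\comp{G}$ (equivalently $\cl{H}=G$); this is a classical polynomial-time task, since line graphs are recognizable in linear time and, by Whitney's isomorphism theorem, the root graph is unique up to the single exception $K_3$ versus $K_{1,3}$, which is harmless here since both candidates yield the same ($h$-perfect) complement. If $\comp{G}$ is not a line graph, the input lies outside the class and we reject it. By Corollary \ref{small}, $G=\cl{H}$ is $h$-perfect if and only if (a) every odd circuit of $H$ has length at most $5$, and (b) every edge of $H$ has an end in each circuit of length $5$. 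It therefore suffices to decide (a) and (b) for $H$ in polynomial time.

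Condition (b) is the easy one. It fails exactly when $H$ has a circuit $C$ of length $5$ together with an edge having no end on $C$; equivalently, when some five vertices carrying a spanning $C_5$ leave behind a non-stable set. Since $H$ has only $\binom{|V(H)|}{5}=O(|V(H)|^5)$ sets of five vertices, I would enumerate them, keep those that induce a circuit of length $5$, and for each such $C$ test in linear time whether $V(H)\sm V(C)$ is stable. This runs in polynomial time and decides (b) exactly.

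The main obstacle is condition (a), namely deciding whether $H$ contains a circuit of odd length at least $7$. This cannot be reduced to inspecting circuits of a few bounded lengths: for every odd $\ell\geq 7$ one can take $C_\ell$ and add a single chord that creates a triangle, producing a graph whose only odd circuits are a triangle and the circuit of length $\ell$ itself, with no odd circuit of intermediate length. Hence the length of a shortest \emph{long} odd circuit is not bounded a priori, and a naive length-bounded enumeration is doomed. What rescues the argument is that the forbidden length is bounded below by the \emph{constant} $7$: deciding whether a graph contains a cycle of length at least a fixed constant, with prescribed parity, is a polynomial-time problem (for example via color-coding, or via the block decomposition together with a standard long-cycle detection). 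Concretely, I would first compute the odd girth of $H$ by a shortest-odd-cycle computation and dispose of the easy subcases (if $H$ is bipartite then (a) and (b) hold and $G$ is $h$-perfect, while if the odd girth is already at least $7$ then (a) fails); the genuinely delicate remaining case is when $H$ carries a triangle or a $5$-circuit yet one must still \emph{certify the absence} of every longer odd circuit, which is exactly the long-odd-cycle detection above. Combining reconstruction of $H$, the enumeration for (b), and this detection for (a) gives a polynomial-time algorithm, and I expect the careful (preferably elementary) justification of this last step to be where the real work lies.
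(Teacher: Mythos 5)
Your reduction is the same as the paper's up to the halfway point: recognize $\comp{G}$ as a line graph, reconstruct a root graph $H$, and apply Corollary \ref{small}; your brute-force test of condition (b) over all 5-sets is also fine, modulo one slip to fix — you must keep the 5-sets that carry a \emph{spanning} circuit of length 5 (as your first formulation correctly says), not those that \emph{induce} $C_5$: for instance $K_5$ plus a disjoint edge violates (b) although it contains no induced $C_5$. The genuine gap is condition (a). You correctly observe (with your chord example) that no bounded list of exact cycle lengths suffices, and then rest the entire proof on the claim that detecting a cycle of prescribed parity and length at least a fixed constant is polynomial — a claim you explicitly leave unjustified, and whose two suggested justifications do not work as stated. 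Color-coding detects cycles of a given \emph{exact} length, so by your own observation it does not apply; block decomposition plus ``standard long-cycle detection'' decides the existence of a long cycle but says nothing about its parity — to conclude you would additionally need a lemma of the form ``a 2-connected non-bipartite graph containing a cycle of length $L$ contains an odd cycle of length at least roughly $L/2$'' (provable via two disjoint paths from an odd cycle to a longest cycle, choosing the connecting path's parity), which you neither state nor prove. Since this is exactly where you say ``the real work lies,'' the crux of the proposition is missing from the proposal.

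The paper avoids all of this machinery by exploiting the interaction between (a) and (b) rather than treating them independently. It first tests whether $H$ contains an odd circuit of length at least 5 at all, using a polynomial-time algorithm of Trotter \cite{trotter1977line} (this also disposes of your delicate ``triangle but no $C_5$'' subcase); if such a circuit of length at least 7 is exhibited, reject, and otherwise a circuit $C$ of length exactly 5 is at hand. After verifying (b), every edge of $H$ has an end in $V(C)$, so for any circuit $D$ of length $N\geq 6$ the at least $N-5$ vertices of $D$ outside $C$ form a stable set of $D$; since a circuit on $N$ vertices has stability number $\floor{N/2}$, this gives $N-5\leq\floor{N/2}$, hence $N\leq 10$. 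Condition (a) thus reduces to detecting $C_7$ or $C_9$ as a subgraph — fixed-length detection, trivially polynomial. You could repair your proof either by supplying the long-odd-cycle lemma above (making your route correct but heavier), or, much more simply, by testing (b) first and using this counting argument to bound the circumference.
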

\begin{proof}
	It is well-known that deciding whether a graph is a line graph (and finding a source graph if it exists) can be done in polynomial time \cite{Beineke1970}.  
	Hence, by Corollary \ref{small}, we need only testing whether an input graph $H$ satisfies the following two conditions: (i) each edge of $H$ has at least one end in each circuit of length 5, and (ii) every odd circuit of $H$ has length at most 5.
	
	We start with testing whether $H$ contains an odd circuit of length at least 5. This can be done in polynomial-time using an algorithm of Trotter \cite{trotter1977line}.
	Obviously, if $H$ has no odd circuit of length at least 5, then conditions (i) and (ii) are satisfied, thus $\cl{H}$ is $h$-perfect and we are done. On the other hand, if the algorithm of \cite{trotter1977line} shows an odd circuit of length at least 7, then (ii) is false and therefore $\cl{H}$ is not $h$-perfect.
	
	Hence we may assume that $H$ contains a circuit $C$ of length exactly 5. Since (i) can be tested in polynomial time (obviously), we may also assume that it holds. 
	
	We show that these additional assumptions ensure that $H$ cannot have a circuit of length greater than 10. 	
	Indeed, let $D$ be a circuit of $H$ of length $N\geq 6$. 
	Since $|V(D)\sm V(C)|\geq N-5$, and as (i) holds (by assumption), $V(D)\sm V(C)$ is a stable set of $D$.
	On the other hand, it is straightforward to check that the cardinality of a largest stable set of the cycle on $N$ vertices is $\floor{N/2}$. Hence, $N-5\leq\floor{N/2}$, which implies $N\leq 10$.
	
	Therefore, checking (ii) only requires testing whether $H$ contains $C_7$ or $C_9$ as a subgraph. Clearly, this can be carried out in polynomial-time.
\end{proof}
 
We end this section by showing Theorem \ref{p6free} as a direct corollary of the following result, due to Randerath, Schiermeyer and Tewes. The Mycielski-Gr\"otzsch graph is showed in Figure \ref{mg}.

\begin{figure}
	\centering
	\begin{tikzpicture}[scale=2]
	
	\tikzset{vertex/.style={draw,circle,scale=0.3}}

	\node[vertex] (v00) at (0,1) {};
	
	\draw (v00) node[above, scale=0.7]{$u$};
	
	\node[vertex] (v11) at (-1,.5) {};
	\node[vertex] (v12) at (-.5,.5) {};
	\node[vertex] (v13) at (0,.5) {};
	\node[vertex] (v14) at (.5,.5) {};
	\node[vertex] (v15) at (1,.5) {};
	
	\node[vertex] (v21) at (-1,0) {};
	\node[vertex] (v22) at (-.5,0) {};
	\node[vertex] (v23) at (0,0) {};
	\node[vertex] (v24) at (.5,0) {};
	\node[vertex] (v25) at (1,0) {};
	
	\draw (v00) edge (v11)
	edge (v12)
	edge (v13)
	edge (v14)
	edge (v15);
	
	\draw (v21)--(v22)--(v23)--(v24)--(v25);
	
	\draw (v11)--(v22)--(v13)--(v24)--(v15)--(v21)
	--(v12)--(v23)--(v14)--(v25)--(v11);
	
	\draw (v21) to[bend right] (v25);

	%
	
	%

	%
	%
	%
	%
	%

	\end{tikzpicture}
	\caption{the Mycielski-Gr\"otzsch graph}\label{mg}
\end{figure}
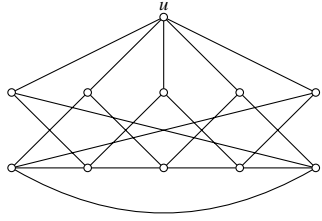

\begin{theorem}[\cite{Randerath2002}]\label{randerath}
	The Mycielski-Gr\"otzsch graph is the only \mntc triangle-free $P_6$-free graph.
\end{theorem}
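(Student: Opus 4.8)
The goal is to show that the Mycielski-Gr\"otzsch graph, call it $M$, is the \emph{unique} $4$-critical triangle-free $P_6$-free graph. There are two things to establish: first, that $M$ itself has these three properties, and second, that any graph $G$ with these properties must be isomorphic to $M$. The first part is a direct (if slightly tedious) verification: $M$ is triangle-free by inspection of its construction (it is the Mycielskian of $C_5$, and the Mycielski construction preserves triangle-freeness); it has chromatic number exactly $4$ (a standard fact, $M$ being a classical example of a triangle-free graph with $\chi=4$) and is vertex-critical for this, which can be checked by exhibiting a $3$-coloring of $M-w$ for each vertex $w$ (using the symmetry of the construction, only a few cases up to automorphism are needed); and $P_6$-freeness is a finite check on the $11$-vertex graph. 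The substantive content is the uniqueness (second) part.

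\textbf{Uniqueness: the structural approach.} Let $G$ be any $4$-critical, triangle-free, $P_6$-free graph. The plan is to extract rigid structure from these hypotheses and pin $G$ down vertex by vertex. Since $G$ is $4$-critical it has minimum degree at least $3$ (a vertex of degree $\le 2$ could be removed, $3$-colored, and reinserted). Triangle-freeness forces the neighborhood of every vertex to be an independent set. Because $G$ is not $3$-colorable but triangle-free, it must contain an odd cycle, and the shortest odd cycle has length at least $5$; the $P_6$-free condition severely restricts how long induced paths (and hence how ``spread out'' the graph) can be. The first concrete step would be to fix a shortest odd cycle $C$ and argue, using $P_6$-freeness together with triangle-freeness, that $C$ has length exactly $5$ and is induced --- a longer induced odd cycle or the detours forced around it would produce an induced $P_6$. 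I would then analyze how the remaining vertices of $G$ attach to this induced $C_5$: each outside vertex has a neighborhood on $C$ that is independent (triangle-freeness), so it sees at most two vertices of $C$, and $P_6$-freeness plus criticality constrain the possible attachment patterns. Carrying this bookkeeping through should force $G$ to have exactly the $11$ vertices and adjacencies of $M$.

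\textbf{Where the real difficulty lies.} The hard part is the uniqueness argument: triangle-freeness, $P_6$-freeness, and $4$-criticality are each individually weak, and making them interact to force a single graph requires careful and somewhat delicate case analysis of how vertices outside the central $C_5$ attach and how they interconnect. In particular one must rule out alternative triangle-free $4$-critical $P_6$-free graphs, and the bound on the number of vertices is not immediate from the three hypotheses alone. Rather than reprove this by hand, I would invoke the established theory of triangle-free $4$-chromatic graphs of bounded structure --- this statement is precisely the theorem of Randerath, Schiermeyer and Tewes~\cite{Randerath2002}, and the natural route is to cite their structural and computer-assisted classification as the source of uniqueness. Thus the honest ``proof'' here is: verify the three properties for $M$ directly, and appeal to~\cite{Randerath2002} for the fact that no other graph has all three. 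The genuinely new work in the surrounding paper does not lie in reproving this classification but in \emph{using} it, which is why the statement is presented here as a cited input (Theorem~\ref{randerath}) rather than with a self-contained proof.
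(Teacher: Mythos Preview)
Your assessment is correct and matches the paper exactly: Theorem~\ref{randerath} is stated in the paper as a cited result from~\cite{Randerath2002} with no self-contained proof, and the paper only \emph{uses} it (in the proof of Theorem~\ref{p6free}). Your recognition that the honest route here is to verify the easy direction for $M$ and cite~\cite{Randerath2002} for uniqueness is precisely how the paper treats it.
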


Gerards and Shepherd \cite{Gerards1998} proved that \emph{for every $t$-perfect graph $G$ and for every vertex $v\in\vts{G}$ whose neighborhood is a stable set: the graph obtained by identifying $v$ and its neighbors to a single vertex is $t$-perfect} (loops and multiple edges which may arise are deleted). This implies that the Mycielski-Gr\"otzsch graph is \emph{not} $t$-perfect. Indeed, contracting every edge incident to $u$  yields $W_5$ (see Figure \ref{mg}), which is not $t$-perfect. 

Theorem \ref{p6free} easily follows by induction:

\begin{proof}[Proof of Theorem \ref{p6free}] 
	Let $G$ be a $t$-perfect $P_6$-free graph. 
	Suppose first that $\omega(G)\leq 2$, that is $G$ is triangle-free. Since the Mycielski-Gr\"otzsch graph is not $t$-perfect (see the discussion above), $G$ cannot contain it as an induced subgraph (as $t$-perfection is closed for taking induced subgraphs). Therefore, since $G$ is $P_6$-free, Theorem \ref{randerath} directly implies that $\chi(G)\leq 3$. Hence, we may assume that $\omega(G)\geq 3$. By Lemma \ref{sebo}, there exists a stable set $S$ of $G$ such that $\omega(G-S)=\omega(G)-1$. 
	 By induction, $\chi(G-S)\leq \omega(G-S)+1$, and thus $\chi(G)\leq \chi(G-S)+1\leq \omega(G)+1$, as stated (this is the same argument as in the proof of Corollary \ref{colclhp}). 
	
\end{proof}

\section{Final remarks}\label{conclusion}
We proved that  $\cl{\Pi}$ and $\cl{W_5}$ are the only minimal obstructions to 3-colorability for $t$-perfect graphs which are complements of  line graphs. We do not know whether there are more obstructions to 3-colorability for \emph{every} $t$-perfect graph, nor if there exists a 5-chromatic $t$-perfect graph. 
Besides, the computational complexity of determining the chromatic number of a $t$-perfect graph is still open.

Seb\H{o} conjectures that \emph{every $t$-perfect triangle-free graph is 3-colorable} (Conjecture 25 in \cite{Bruhn2012}). By Lemma \ref{sebo}, this would imply that every $t$-perfect graph is 4-colorable. As stated by Theorem \ref{p6free}, this conjecture is true for $P_6$-free graphs.



A graph is \emph{nearly-bipartite} if each vertex has at least one neighbor on each odd circuit. It is easy to check that nearly-bipartite graphs, as $P_6$-free graphs, form a superclass of complements of line graphs. Using Theorem 1, \cite{BenchetritUlmer} proves that \emph{$\cl{\Pi}$ and $\cl{W_5}$ are the only \mntc $t$-perfect nearly-bipartite graphs}.


We prove in Appendix \ref{appen} that $\cl{\Pi}$ and $\cl{W_5}$ are the only \mntc $t$-perfect $P_6$-free graphs (Theorem \ref{p6critical}). Notice that the proof of Corollary \ref{colclhp} shows more generally the following, for every class of graphs $\mathcal{C}$ which is closed by taking induced subgraphs. Let $\mathcal{F}$ be the set of 4-critical $t$-perfect graphs in $\mathcal{C}$. For every $h$-perfect graph $G\in\mathcal{C}$: $\chi(H)=\ceil{\chi_f(H)}$ for every induced subgraph $H$ of $G$ if and only if $G$ does not contain a graph of $\mathcal{F}$ as an induced subgraph. 

Hence, the equivalence stated by Corollary \ref{colclhp} holds more generally for $t$-perfect $P_6$-free graphs. Furthermore, results of \cite{ChudnovskyGoedgebeurSchaudtEtAl2015} and \cite{Randerath2004} easily imply that an optimal coloring or an obstruction can be found in polynomial-time. Yet, it is still open whether $t$-perfection can be tested in polynomial-time in the class of $P_6$-free graphs (Bruhn and Fuchs \cite{BruhnFuchs2015} showed that this is true at least for the subclass of $P_5$-free graphs).

We determined the 4-critical $t$-perfect $P_6$-free graphs by providing, for each graph other than $\cl{\Pi}$ and $\cl{W_5}$ in the explicit list of all \mntc $P_6$-free graphs \cite{ChudnovskyGoedgebeurSchaudtEtAl2015}, a certificate of non-$t$-perfection, built independently for each graph. The same approach is used in \cite{BruhnFuchs2015} to show that every $t$-perfect $P_5$-free graph is 3-colorable. 
Alternative proofs of these results, using more general arguments (as given here in the special case of complements of line graphs), are certainly of interest since the listing approach  fails when excluding larger induced paths: for every $k\geq 7$, the set of \mntc $P_k$-free graphs is infinite \cite{ChudnovskyGoedgebeurSchaudtEtAl2015}.



\appendix
\section{Proof of Theorem \ref{p6critical}}\label{appen}

A basic polyhedral argument shows that a $K_4$-free graph $G$ is $t$-perfect if and only if all the vertices of the polytope $\mathsf{HSTAB}(G)$  are \emph{integral}, that is they have integer coordinates only ($\mathsf{HSTAB}(G)$ is defined by the non-negativity, clique and odd circuit inequalities associated to $G$; see Section \ref{defs}). 

Hence, to prove that a $K_4$-free graph $G$ is not $t$-perfect, it suffices to show a non-integral point $z\in\mathsf{HSTAB}(G)$, and  $|\vts{G}|$ inequalities $a_1^{T}x\leq b_1,\ldots,a_{|\vts{G}|}^{T}x\leq b_{|\vts{G}|}$ of non-negativity, cliques or odd circuits, such that the real vectors $a_1,\ldots,a_{|\vts{G}|}$ are linearly independent; it is well-known that such inequalities certify that $z$ is a vertex of $\mathsf{HSTAB}(G)$. 

We now use this observation and a result of \cite{ChudnovskyGoedgebeurSchaudtEtAl2015} to prove Theorem \ref{p6critical}, which states that $\cl{\Pi}$ and $\cl{W_5}$ are the only  4-critical $t$-perfect $P_6$-free graphs.


The eighty 4-critical $P_6$-free graphs are determined explicitly in \cite{ChudnovskyGoedgebeurSchaudtEtAl2015}. In the list \cite{hog}, these graphs are referred to as \emph{4-vertex-critical} $P_6$-free graphs and are given encoded in {\sc graph6} format \cite{McKay1981}; $\cl{\Pi}$ and $\cl{W_5}$ are respectively encoded as the {\sc graph6} strings \texttt{HErb\`{}yi} and \texttt{I?Becw\}Yo}. 

The $t$-perfection of $\cl{\Pi}$ and $\cl{W_5}$ is proved in Section \ref{proofs}. Furthermore, $K_4$ is a 4-critical $P_6$-free graph whose non-$t$-perfection is trivial (it is encoded as "\texttt{C\textasciitilde}" and is the first graph in the list at \cite{hog}). Hence, it only remains to show a non-integral vertex of $\mathsf{HSTAB}$ for each of the other seventy-seven 4-critical $P_6$-free graphs (they are $K_4$-free, since they are 4-critical and not $K_4$ itself).

The two Tables \ref{table} and \ref{tablec} provide, for each of these 77 graphs, the {\sc graph6} code as given in \cite{hog}, an adjacency list and a non-integral vertex of $\mathsf{HSTAB}$ (we give non-redundant adjacency lists, meaning that for each pair $ij$ with $i<j$, $ij$ is an edge of the graph if and only if  $j$ is in the list of $i$). Checking that this point is indeed a vertex of $\mathsf{HSTAB}$ is straightforward. Hence, Theorem \ref{p6critical} is proved.

Among the eighty 4-critical $P_6$-free graphs, only five of them are such that $\mathsf{HSTAB}$ has a unique non-integral vertex. In fact, they are \emph{minimally $t$-imperfect} \cite{BruhnFuchs2015,Bruhn2012}. We emphasize these 5 graphs (except $K_4$) in the Tables \ref{table} and \ref{tablec} by giving an alternative description (in parentheses) to the {\sc graph6} code; $C_n^{k}$ denotes the graph obtained from the cycle $C_n$ by adding all edges between two vertices at distance at most $k$ on $C_n$.

\begin{table}[t]
	\caption{4-critical $P_6$-free graphs}\label{table}
\begin{adjustbox}{width=1.2\textwidth,center=\textwidth}
	\begin{tabular}{|l|l|c|}
		\hline 
		{\sc graph6 code} & adjacency list & non-integral vertex of $\mathsf{HSTAB}$ \\ \hline
		\texttt{EUZw}\quad($W_5$) & $0 :\left[2, 3, 5\right], 1 :\left[3, 4, 5\right], 2 :\left[4, 5\right], 3 :\left[5\right], 4 :\left[5\right]$ & $\left[\frac{2}{5}, \frac{2}{5}, \frac{2}{5}, \frac{2}{5}, \frac{2}{5}, \frac{1}{5}\right]$ \\ \hline 
		\texttt{FEnbo} & $0 :\left[3, 4, 5\right], 1 :\left[3, 5, 6\right], 2 :\left[4, 5, 6\right], 3 :\left[4, 6\right], 4 :\left[6\right]$ & $\left[\frac{1}{3}, \frac{1}{3}, 0, \frac{1}{3}, \frac{1}{3}, \frac{2}{3}, \frac{1}{3}\right]$ \\ \hline
		\texttt{FqjRo} & $0 :\left[1, 2, 4, 5\right], 1 :\left[3, 5, 6\right], 2 :\left[4, 6\right], 3 :\left[5, 6\right], 4 :\left[6\right]$ & $\left[\frac{1}{3}, 0, \frac{1}{3}, \frac{2}{3}, \frac{1}{3}, \frac{1}{3}, \frac{1}{3}\right]$ \\ \hline
		\texttt{FQjRo} & $0 :\left[2, 4, 5\right], 1 :\left[3, 5, 6\right], 2 :\left[4, 6\right], 3 :\left[5, 6\right], 4 :\left[6\right]$ & $\left[\frac{1}{3}, 0, \frac{1}{3}, \frac{1}{3}, \frac{1}{3}, \frac{2}{3}, \frac{1}{3}\right]$ \\ \hline
		\texttt{FrjRo} & $0 :\left[1, 2, 4, 5\right], 1 :\left[3, 5, 6\right], 2 :\left[3, 4, 6\right], 3 :\left[5, 6\right], 4 :\left[6\right]$ & $\left[\frac{1}{3}, \frac{1}{3}, 0, \frac{1}{3}, \frac{2}{3}, \frac{1}{3}, \frac{1}{3}\right]$ \\ \hline
		\texttt{FrjZo}\quad($\overline{C_7}$) & $0 :\left[1, 2, 4, 5\right], 1 :\left[3, 5, 6\right], 2 :\left[3, 4, 6\right], 3 :\left[5, 6\right], 4 :\left[5, 6\right]$ & $\left[\frac{1}{3}, \frac{1}{3}, \frac{1}{3}, \frac{1}{3}, \frac{1}{3}, \frac{1}{3}, \frac{1}{3}\right]$ \\ \hline
		\texttt{FYjRo} & $0 :\left[2, 4, 5\right], 1 :\left[2, 3, 5, 6\right], 2 :\left[4, 6\right], 3 :\left[5, 6\right], 4 :\left[6\right]$ & $\left[\frac{1}{3}, 0, \frac{1}{3}, \frac{1}{3}, \frac{1}{3}, \frac{2}{3}, \frac{1}{3}\right]$ \\ \hline
		\texttt{FYnRo} & $0 :\left[2, 4, 5\right], 1 :\left[2, 3, 5, 6\right], 2 :\left[4, 6\right], 3 :\left[4, 5, 6\right], 4 :\left[6\right]$ & $\left[\frac{2}{3}, \frac{1}{3}, \frac{1}{3}, \frac{1}{3}, 0, \frac{1}{3}, \frac{1}{3}\right]$ \\ \hline
		\texttt{GCqjbc} & $0 :\left[3, 4, 5\right], 1 :\left[4, 6, 7\right], 2 :\left[5, 6, 7\right], 3 :\left[6, 7\right], 4 :\left[5\right], 6 :\left[7\right]$ & $\left[0, \frac{1}{3}, \frac{1}{3}, 0, \frac{1}{3}, \frac{2}{3}, \frac{1}{3}, \frac{1}{3}\right]$ \\ \hline
		\texttt{GCqnbc} & $0 :\left[3, 4, 5, 6\right], 1 :\left[4, 6, 7\right], 2 :\left[5, 6, 7\right], 3 :\left[6, 7\right], 4 :\left[5\right], 6 :\left[7\right]$ & $\left[0, \frac{1}{3}, \frac{1}{3}, \frac{1}{3}, \frac{2}{3}, \frac{1}{3}, \frac{1}{3}, \frac{1}{3}\right]$ \\ \hline
		\texttt{GCqnbs} & $0 :\left[3, 4, 5, 6\right], 1 :\left[4, 6, 7\right], 2 :\left[5, 6, 7\right], 3 :\left[6, 7\right], 4 :\left[5, 7\right], 6 :\left[7\right]$ & $\left[0, \frac{1}{3}, \frac{1}{3}, 0, \frac{1}{3}, \frac{2}{3}, \frac{1}{3}, \frac{1}{3}\right]$ \\ \hline
		\texttt{Gcrbds} & $0 :\left[1, 3, 4, 5, 7\right], 1 :\left[4, 5, 6\right], 2 :\left[5, 6, 7\right], 3 :\left[6, 7\right], 4 :\left[7\right], 6 :\left[7\right]$ & $\left[\frac{1}{3}, \frac{1}{3}, \frac{2}{3}, 0, \frac{1}{3}, \frac{1}{3}, 0, \frac{1}{3}\right]$ \\ \hline
		\texttt{GCrbds} & $0 :\left[3, 4, 5, 7\right], 1 :\left[4, 5, 6\right], 2 :\left[5, 6, 7\right], 3 :\left[6, 7\right], 4 :\left[7\right], 6 :\left[7\right]$ & $\left[\frac{1}{3}, \frac{1}{3}, \frac{1}{3}, \frac{1}{3}, 0, \frac{2}{3}, \frac{1}{3}, \frac{1}{3}\right]$ \\ \hline
		\texttt{GCRdrs} & $0 :\left[3, 5, 6\right], 1 :\left[4, 5, 7\right], 2 :\left[5, 6, 7\right], 3 :\left[6, 7\right], 4 :\left[6, 7\right], 6 :\left[7\right]$ & $\left[\frac{1}{3}, \frac{1}{3}, \frac{1}{3}, \frac{1}{3}, 0, \frac{2}{3}, \frac{1}{3}, \frac{1}{3}\right]$ \\ \hline
		\texttt{HCQ\`{}fp]} & $0 :\left[3, 5, 7\right], 1 :\left[4, 7, 8\right], 2 :\left[5, 6, 7\right], 3 :\left[6, 7, 8\right], 4 :\left[7, 8\right], 5 :\left[8\right], 6 :\left[8\right]$ & $\left[\frac{1}{3}, \frac{1}{3}, \frac{1}{3}, 0, \frac{1}{3}, \frac{2}{3}, 0, \frac{1}{3}, \frac{1}{3}\right]$ \\ \hline
		\texttt{HCQ\`{}fP]} & $0 :\left[3, 5, 7\right], 1 :\left[4, 7, 8\right], 2 :\left[5, 6, 7\right], 3 :\left[6, 8\right], 4 :\left[7, 8\right], 5 :\left[8\right], 6 :\left[8\right]$ & $\left[\frac{2}{3}, \frac{1}{3}, \frac{1}{3}, \frac{1}{3}, \frac{1}{3}, 0, \frac{1}{3}, \frac{1}{3}, \frac{1}{3}\right]$ \\ \hline
		\texttt{HCQ\`{}fR]} & $0 :\left[3, 5, 7, 8\right], 1 :\left[4, 7, 8\right], 2 :\left[5, 6, 7\right], 3 :\left[6, 8\right], 4 :\left[7, 8\right], 5 :\left[8\right], 6 :\left[8\right]$ & $\left[0, \frac{1}{3}, \frac{1}{3}, \frac{2}{3}, \frac{1}{3}, \frac{2}{3}, 0, \frac{1}{3}, \frac{1}{3}\right]$ \\ \hline
		\texttt{HCQ\`{}fV]} & $0 :\left[3, 5, 7, 8\right], 1 :\left[4, 7, 8\right], 2 :\left[5, 6, 7\right], 3 :\left[6, 8\right], 4 :\left[7, 8\right], 5 :\left[8\right], 6 :\left[7, 8\right]$ & $\left[0, \frac{1}{3}, \frac{1}{3}, \frac{1}{3}, \frac{1}{3}, \frac{2}{3}, \frac{1}{3}, \frac{1}{3}, \frac{1}{3}\right]$ \\ \hline
		\texttt{HCQ\`{}fX]} & $0 :\left[3, 5, 7\right], 1 :\left[4, 7, 8\right], 2 :\left[5, 6, 7\right], 3 :\left[6, 8\right], 4 :\left[7, 8\right], 5 :\left[7, 8\right], 6 :\left[8\right]$ & $\left[\frac{1}{2}, \frac{1}{4}, \frac{1}{2}, \frac{1}{4}, \frac{1}{4}, 0, \frac{1}{4}, \frac{1}{2}, \frac{1}{2}\right]$ \\ \hline
		\texttt{HCq\`{}qjy} & $0 :\left[3, 4, 5, 8\right], 1 :\left[4, 7, 8\right], 2 :\left[5, 6, 8\right], 3 :\left[6, 7, 8\right], 4 :\left[6, 8\right], 5 :\left[7\right], 6 :\left[8\right]$ & $\left[\frac{1}{3}, \frac{1}{3}, \frac{1}{3}, \frac{1}{3}, \frac{1}{3}, \frac{1}{3}, 0, \frac{2}{3}, \frac{1}{3}\right]$ \\ \hline
		\texttt{HCq\`{}v\`{}]} & $0 :\left[3, 4, 5, 7\right], 1 :\left[4, 7, 8\right], 2 :\left[5, 6, 7\right], 3 :\left[6, 7, 8\right], 4 :\left[6, 8\right], 5 :\left[8\right], 6 :\left[8\right]$ & $\left[0, \frac{1}{3}, \frac{1}{3}, \frac{1}{3}, \frac{1}{3}, \frac{1}{3}, \frac{1}{3}, \frac{2}{3}, \frac{1}{3}\right]$ \\ \hline
		\texttt{HCrb\`{}qi} & $0 :\left[3, 4, 5, 8\right], 1 :\left[4, 5, 6\right], 2 :\left[5, 6, 7, 8\right], 3 :\left[6, 7\right], 4 :\left[7, 8\right], 6 :\left[8\right]$ & $\left[0, \frac{1}{3}, \frac{1}{3}, \frac{2}{3}, \frac{1}{3}, \frac{2}{3}, \frac{1}{3}, \frac{1}{3}, \frac{1}{3}\right]$ \\ \hline
		\texttt{HErb\`{}qi} & $0 :\left[3, 4, 5, 8\right], 1 :\left[3, 4, 5, 6\right], 2 :\left[5, 6, 7, 8\right], 3 :\left[6, 7\right], 4 :\left[7, 8\right], 6 :\left[8\right]$ & $\left[\frac{2}{5}, \frac{2}{5}, \frac{1}{5}, 0, \frac{1}{5}, \frac{2}{5}, \frac{2}{5}, \frac{4}{5}, \frac{2}{5}\right]$ \\ \hline
		\texttt{HKq\`{}v\`{}]} & $0 :\left[3, 4, 5, 7\right], 1 :\left[2, 4, 7, 8\right], 2 :\left[5, 6, 7\right], 3 :\left[6, 7, 8\right], 4 :\left[6, 8\right], 5 :\left[8\right], 6 :\left[8\right]$ & $\left[0, \frac{1}{3}, 0, \frac{1}{3}, \frac{1}{3}, \frac{2}{3}, \frac{1}{3}, \frac{2}{3}, \frac{1}{3}\right]$ \\ \hline
		\texttt{HOq\`{}v\`{}]} & $0 :\left[2, 4, 5, 7\right], 1 :\left[4, 7, 8\right], 2 :\left[5, 6, 7\right], 3 :\left[6, 7, 8\right], 4 :\left[6, 8\right], 5 :\left[8\right], 6 :\left[8\right]$ & $\left[0, \frac{2}{5}, \frac{1}{5}, \frac{2}{5}, \frac{2}{5}, \frac{4}{5}, \frac{2}{5}, \frac{2}{5}, \frac{1}{5}\right]$ \\ \hline
		\texttt{H?q\`{}qjy} & $0 :\left[4, 5, 8\right], 1 :\left[4, 7, 8\right], 2 :\left[5, 6, 8\right], 3 :\left[6, 7, 8\right], 4 :\left[6, 8\right], 5 :\left[7\right], 6 :\left[8\right]$ & $\left[\frac{1}{3}, \frac{1}{3}, \frac{1}{3}, \frac{1}{3}, \frac{1}{3}, \frac{2}{3}, 0, \frac{1}{3}, \frac{1}{3}\right]$ \\ \hline
		\texttt{H?q\`{}v\`{}]} & $0 :\left[4, 5, 7\right], 1 :\left[4, 7, 8\right], 2 :\left[5, 6, 7\right], 3 :\left[6, 7, 8\right], 4 :\left[6, 8\right], 5 :\left[8\right], 6 :\left[8\right]$ & $\left[0, \frac{1}{3}, 0, \frac{1}{3}, \frac{1}{3}, \frac{2}{3}, \frac{1}{3}, \frac{2}{3}, \frac{1}{3}\right]$ \\ \hline
		\texttt{H?q\`{}vh]} & $0 :\left[4, 5, 7\right], 1 :\left[4, 7, 8\right], 2 :\left[5, 6, 7\right], 3 :\left[6, 7, 8\right], 4 :\left[6, 8\right], 5 :\left[7, 8\right], 6 :\left[8\right]$ & $\left[0, \frac{1}{3}, 0, \frac{1}{3}, \frac{1}{3}, \frac{1}{3}, \frac{1}{3}, \frac{2}{3}, \frac{1}{3}\right]$ \\ \hline
		\texttt{HSq\`{}v\`{}]} & $0 :\left[2, 3, 4, 5, 7\right], 1 :\left[4, 7, 8\right], 2 :\left[5, 6, 7\right], 3 :\left[6, 7, 8\right], 4 :\left[6, 8\right], 5 :\left[8\right], 6 :\left[8\right]$ & $\left[0, \frac{2}{5}, \frac{1}{5}, \frac{2}{5}, \frac{2}{5}, \frac{4}{5}, \frac{2}{5}, \frac{2}{5}, \frac{1}{5}\right]$ \\ \hline
		\texttt{IAbBV\_\{d\_} & $0 :\left[4, 5, 7, 9\right], 1 :\left[3, 5, 6, 7\right], 2 :\left[6, 7, 8\right], 3 :\left[7, 8, 9\right], 4 :\left[6, 8\right], 5 :\left[8, 9\right], 6 :\left[9\right]$ & $\left[\frac{1}{2}, \frac{1}{3}, \frac{1}{3}, \frac{1}{3}, \frac{1}{3}, \frac{1}{3}, \frac{1}{2}, \frac{1}{3}, \frac{1}{2}, 0\right]$ \\ \hline
		\texttt{I?bBV\_\{d\_} & $0 :\left[4, 5, 7, 9\right], 1 :\left[5, 6, 7\right], 2 :\left[6, 7, 8\right], 3 :\left[7, 8, 9\right], 4 :\left[6, 8\right], 5 :\left[8, 9\right], 6 :\left[9\right]$ & $\left[0, \frac{1}{3}, \frac{1}{3}, \frac{1}{3}, \frac{2}{3}, \frac{1}{3}, \frac{1}{3}, \frac{2}{3}, \frac{1}{3}, \frac{1}{3}\right]$ \\ \hline
		\texttt{ICOebGmeO} & $0 :\left[3, 6, 9\right], 1 :\left[4, 6, 7\right], 2 :\left[5, 7, 8\right], 3 :\left[6, 9\right], 4 :\left[8, 9\right], 5 :\left[7, 8\right], 6 :\left[8\right], 7 :\left[9\right]$ & $\left[0, \frac{1}{3}, 0, \frac{2}{3}, \frac{1}{3}, \frac{2}{3}, \frac{1}{3}, \frac{1}{3}, \frac{1}{3}, \frac{1}{3}\right]$ \\ \hline
		\texttt{ICOebKmeO} & $0 :\left[3, 6, 9\right], 1 :\left[4, 6, 7\right], 2 :\left[5, 7, 8\right], 3 :\left[6, 9\right], 4 :\left[8, 9\right], 5 :\left[7, 8\right], 6 :\left[7, 8\right], 7 :\left[9\right]$ & $\left[0, \frac{1}{3}, 0, \frac{2}{3}, \frac{1}{3}, \frac{2}{3}, \frac{1}{3}, \frac{1}{3}, \frac{1}{3}, \frac{1}{3}\right]$ \\ \hline
		\texttt{ICOebKmeW} & $0 :\left[3, 6, 9\right], 1 :\left[4, 6, 7\right], 2 :\left[5, 7, 8\right], 3 :\left[6, 9\right], 4 :\left[8, 9\right], 5 :\left[7, 8\right], 6 :\left[7, 8\right], 7 :\left[9\right], 8 :\left[9\right]$ & $\left[0, \frac{2}{3}, \frac{1}{3}, 1, \frac{1}{3}, \frac{1}{3}, 0, \frac{1}{3}, \frac{1}{3}, 0\right]$ \\ \hline
		\texttt{ICQfPxsmG} & $0 :\left[3, 5, 6, 9\right], 1 :\left[4, 6, 8\right], 2 :\left[5, 6, 7, 8, 9\right], 3 :\left[7, 8, 9\right], 4 :\left[6, 7, 9\right], 5 :\left[7, 8\right], 8 :\left[9\right]$ & $\left[\frac{1}{3}, \frac{1}{3}, 0, 0, \frac{1}{3}, \frac{2}{3}, \frac{1}{3}, \frac{1}{3}, \frac{1}{3}, \frac{1}{3}\right]$ \\ \hline
		\texttt{IcQf@xsmG} & $0 :\left[1, 3, 5, 6, 9\right], 1 :\left[4, 6, 8\right], 2 :\left[5, 6, 7, 8, 9\right], 3 :\left[7, 8, 9\right], 4 :\left[7, 9\right], 5 :\left[7, 8\right], 8 :\left[9\right]$ & $\left[0, \frac{2}{3}, \frac{1}{3}, \frac{1}{3}, \frac{1}{3}, \frac{1}{3}, \frac{1}{3}, \frac{1}{3}, \frac{1}{3}, 0\right]$ \\ \hline
		\texttt{ICQf\`{}xsmG} & $0 :\left[3, 5, 6, 9\right], 1 :\left[4, 6, 8\right], 2 :\left[5, 6, 7, 8, 9\right], 3 :\left[6, 7, 8, 9\right], 4 :\left[7, 9\right], 5 :\left[7, 8\right], 8 :\left[9\right]$ & $\left[0, 0, \frac{1}{3}, \frac{1}{3}, \frac{2}{3}, \frac{1}{3}, \frac{2}{3}, \frac{1}{3}, \frac{1}{3}, \frac{1}{3}\right]$ \\ \hline
		\texttt{ICQf@xsmG} & $0 :\left[3, 5, 6, 9\right], 1 :\left[4, 6, 8\right], 2 :\left[5, 6, 7, 8, 9\right], 3 :\left[7, 8, 9\right], 4 :\left[7, 9\right], 5 :\left[7, 8\right], 8 :\left[9\right]$ & $\left[\frac{4}{5}, \frac{2}{5}, \frac{2}{5}, 0, \frac{2}{5}, \frac{1}{5}, \frac{1}{5}, \frac{2}{5}, \frac{2}{5}, \frac{1}{5}\right]$ \\ \hline
		\texttt{IcQn@xsmG} & $0 :\left[1, 3, 5, 6, 9\right], 1 :\left[4, 6, 8\right], 2 :\left[5, 6, 7, 8, 9\right], 3 :\left[7, 8, 9\right], 4 :\left[5, 7, 9\right], 5 :\left[7, 8\right], 8 :\left[9\right]$ & $\left[0, \frac{2}{3}, \frac{1}{3}, \frac{2}{3}, \frac{1}{3}, \frac{1}{3}, \frac{1}{3}, \frac{1}{3}, 0, \frac{1}{3}\right]$ \\ \hline
		\texttt{ICq\`{}qjoq\_} & $0 :\left[3, 4, 5, 8, 9\right], 1 :\left[4, 7, 8, 9\right], 2 :\left[5, 6, 8\right], 3 :\left[6, 7, 8\right], 4 :\left[6, 9\right], 5 :\left[7\right], 6 :\left[9\right]$ & $\left[0, \frac{1}{3}, \frac{1}{3}, \frac{1}{3}, 0, \frac{2}{3}, \frac{1}{3}, \frac{1}{3}, \frac{1}{3}, \frac{2}{3}\right]$ \\ \hline
		\texttt{I\_\`{}F?|wlG} & $0 :\left[1, 4, 6, 9\right], 1 :\left[5, 6, 8\right], 2 :\left[6, 8, 9\right], 3 :\left[7, 8, 9\right], 4 :\left[7, 8\right], 5 :\left[7, 9\right], 6 :\left[7\right], 8 :\left[9\right]$ & $\left[0, \frac{1}{3}, \frac{1}{3}, \frac{1}{3}, \frac{1}{3}, 0, \frac{2}{3}, \frac{1}{3}, \frac{1}{3}, \frac{1}{3}\right]$ \\ \hline
		\texttt{I?\`{}F?|wlG} & $0 :\left[4, 6, 9\right], 1 :\left[5, 6, 8\right], 2 :\left[6, 8, 9\right], 3 :\left[7, 8, 9\right], 4 :\left[7, 8\right], 5 :\left[7, 9\right], 6 :\left[7\right], 8 :\left[9\right]$ & $\left[\frac{3}{7}, \frac{3}{7}, \frac{3}{7}, \frac{3}{7}, \frac{3}{7}, 0, \frac{3}{7}, \frac{3}{7}, \frac{2}{7}, \frac{2}{7}\right]$ \\ \hline
		\texttt{I\_\`{}N?|wlG} & $0 :\left[1, 4, 6, 9\right], 1 :\left[5, 6, 8\right], 2 :\left[6, 8, 9\right], 3 :\left[7, 8, 9\right], 4 :\left[5, 7, 8\right], 5 :\left[7, 9\right], 6 :\left[7\right], 8 :\left[9\right]$ & $\left[0, \frac{1}{3}, \frac{1}{3}, \frac{1}{3}, 0, \frac{1}{3}, \frac{2}{3}, \frac{1}{3}, \frac{1}{3}, \frac{1}{3}\right]$ \\ \hline
		\texttt{IObecw\}Yo} & $0 :\left[2, 4, 5, 6, 7\right], 1 :\left[5, 6, 9\right], 2 :\left[5, 8, 9\right], 3 :\left[6, 7, 8\right], 4 :\left[7, 8, 9\right], 5 :\left[7, 8\right], 6 :\left[8, 9\right], 7 :\left[9\right]$ & $\left[\frac{1}{3}, \frac{2}{3}, \frac{1}{3}, \frac{2}{3}, \frac{1}{3}, \frac{1}{3}, 0, \frac{1}{3}, 0, \frac{1}{3}\right]$ \\ \hline
		\texttt{IOBecw\}Yo} & $0 :\left[2, 5, 6, 7\right], 1 :\left[5, 6, 9\right], 2 :\left[5, 8, 9\right], 3 :\left[6, 7, 8\right], 4 :\left[7, 8, 9\right], 5 :\left[7, 8\right], 6 :\left[8, 9\right], 7 :\left[9\right]$ & $\left[\frac{1}{3}, \frac{2}{3}, \frac{1}{3}, \frac{1}{3}, \frac{2}{3}, \frac{1}{3}, \frac{1}{3}, \frac{1}{3}, \frac{1}{3}, 0\right]$ \\ \hline
		\texttt{I?q\`{}qjoq\_} & $0 :\left[4, 5, 8, 9\right], 1 :\left[4, 7, 8, 9\right], 2 :\left[5, 6, 8\right], 3 :\left[6, 7, 8\right], 4 :\left[6, 9\right], 5 :\left[7\right], 6 :\left[9\right]$ & $\left[0, \frac{1}{3}, 0, \frac{1}{3}, \frac{1}{3}, \frac{1}{3}, \frac{1}{3}, \frac{2}{3}, \frac{2}{3}, \frac{1}{3}\right]$ \\ \hline
		\texttt{I?q\`{}qjqq\_} & $0 :\left[4, 5, 8, 9\right], 1 :\left[4, 7, 8, 9\right], 2 :\left[5, 6, 8\right], 3 :\left[6, 7, 8\right], 4 :\left[6, 9\right], 5 :\left[7\right], 6 :\left[8, 9\right]$ & $\left[0, \frac{1}{3}, \frac{1}{3}, \frac{1}{3}, \frac{2}{3}, \frac{2}{3}, \frac{1}{3}, \frac{1}{3}, \frac{1}{3}, 0\right]$ \\ \hline
		\texttt{IsPubGmeO} & $0 :\left[1, 2, 3, 6, 9\right], 1 :\left[4, 5, 6, 7\right], 2 :\left[5, 7, 8\right], 3 :\left[5, 6, 9\right], 4 :\left[8, 9\right], 5 :\left[7, 8\right], 6 :\left[8\right], 7 :\left[9\right]$ & $\left[0, \frac{1}{3}, \frac{2}{3}, \frac{2}{3}, \frac{1}{3}, 0, \frac{1}{3}, \frac{1}{3}, \frac{1}{3}, \frac{1}{3}\right]$ \\ \hline
		\texttt{IsPubGmeW} & $0 :\left[1, 2, 3, 6, 9\right], 1 :\left[4, 5, 6, 7\right], 2 :\left[5, 7, 8\right], 3 :\left[5, 6, 9\right], 4 :\left[8, 9\right], 5 :\left[7, 8\right], 6 :\left[8\right], 7 :\left[9\right], 8 :\left[9\right]$ & $\left[0, \frac{1}{3}, \frac{1}{3}, \frac{1}{3}, 0, \frac{1}{3}, \frac{2}{3}, \frac{1}{3}, \frac{1}{3}, 0\right]$ \\ \hline
		\texttt{IsPubKmeO} & $0 :\left[1, 2, 3, 6, 9\right], 1 :\left[4, 5, 6, 7\right], 2 :\left[5, 7, 8\right], 3 :\left[5, 6, 9\right], 4 :\left[8, 9\right], 5 :\left[7, 8\right], 6 :\left[7, 8\right], 7 :\left[9\right]$ & $\left[0, \frac{2}{5}, \frac{2}{5}, \frac{4}{5}, \frac{2}{5}, \frac{1}{5}, \frac{1}{5}, \frac{2}{5}, \frac{2}{5}, \frac{1}{5}\right]$ \\ \hline

	\end{tabular}
\end{adjustbox}
\end{table}

\begin{table}
	
	\caption{4-critical $P_6$-free graphs (continued)}\label{tablec}
	\renewcommand{\arraystretch}{1.9}
	\begin{adjustbox}{width=1.7\textwidth,center=\textwidth}
		\begin{tabular}{|l|l|c|}	
			\hline 
			{\sc graph6 code} & adjacency list & non-integral vertex of $\mathsf{HSTAB}$ \\ \hline
			
			\texttt{IsPubKmeW} & $0 :\left[1, 2, 3, 6, 9\right], 1 :\left[4, 5, 6, 7\right], 2 :\left[5, 7, 8\right], 3 :\left[5, 6, 9\right], 4 :\left[8, 9\right], 5 :\left[7, 8\right], 6 :\left[7, 8\right], 7 :\left[9\right], 8 :\left[9\right]$ & $\left[0, \frac{1}{3}, \frac{1}{3}, \frac{2}{3}, \frac{2}{3}, \frac{1}{3}, \frac{1}{3}, \frac{1}{3}, \frac{1}{3}, 0\right]$ \\ \hline
			\texttt{IsSubGmeO} & $0 :\left[1, 2, 3, 6, 9\right], 1 :\left[4, 6, 7\right], 2 :\left[5, 7, 8\right], 3 :\left[4, 5, 6, 9\right], 4 :\left[8, 9\right], 5 :\left[7, 8\right], 6 :\left[8\right], 7 :\left[9\right]$ & $\left[0, \frac{1}{3}, \frac{1}{3}, \frac{1}{3}, \frac{1}{3}, \frac{1}{3}, \frac{2}{3}, \frac{1}{3}, \frac{1}{3}, \frac{1}{3}\right]$ \\ \hline
			\texttt{IsSubKmeO}  & $0 :\left[1, 2, 3, 6, 9\right], 1 :\left[4, 6, 7\right], 2 :\left[5, 7, 8\right], 3 :\left[4, 5, 6, 9\right], 4 :\left[8, 9\right], 5 :\left[7, 8\right], 6 :\left[7, 8\right], 7 :\left[9\right]$ & $\left[\frac{1}{3}, \frac{2}{3}, \frac{1}{3}, \frac{1}{3}, \frac{1}{3}, \frac{1}{3}, 0, \frac{1}{3}, \frac{1}{3}, 0\right]$ \\ \hline
			\texttt{IsSubKmeW} & $0 :\left[1, 2, 3, 6, 9\right], 1 :\left[4, 6, 7\right], 2 :\left[5, 7, 8\right], 3 :\left[4, 5, 6, 9\right], 4 :\left[8, 9\right], 5 :\left[7, 8\right], 6 :\left[7, 8\right], 7 :\left[9\right], 8 :\left[9\right]$ & $\left[\frac{1}{3}, \frac{1}{3}, \frac{1}{3}, 0, \frac{2}{3}, \frac{1}{3}, 0, \frac{1}{3}, \frac{1}{3}, 0\right]$ \\ \hline
			\texttt{Is\textbackslash ubGmeO} & $0 :\left[1, 2, 3, 6, 9\right], 1 :\left[4, 5, 6, 7\right], 2 :\left[4, 5, 7, 8\right], 3 :\left[4, 5, 6, 9\right], 4 :\left[8, 9\right], 5 :\left[7, 8\right], 6 :\left[8\right], 7 :\left[9\right]$ & $\left[0, \frac{1}{3}, \frac{1}{3}, \frac{1}{3}, \frac{1}{3}, \frac{1}{3}, \frac{2}{3}, \frac{1}{3}, \frac{1}{3}, \frac{1}{3}\right]$ \\ \hline
			\texttt{Is\textbackslash ubKmeO} & $0 :\left[1, 2, 3, 6, 9\right], 1 :\left[4, 5, 6, 7\right], 2 :\left[4, 5, 7, 8\right], 3 :\left[4, 5, 6, 9\right], 4 :\left[8, 9\right], 5 :\left[7, 8\right], 6 :\left[7, 8\right], 7 :\left[9\right]$ & $\left[\frac{1}{3}, \frac{1}{3}, \frac{1}{3}, 0, \frac{1}{3}, \frac{1}{3}, \frac{1}{3}, \frac{1}{3}, \frac{1}{3}, \frac{2}{3}\right]$ \\ \hline
			\texttt{Is\textbackslash ubKmeW}\quad($\overline{C_{10}^{2}}$) & $0 :\left[1, 2, 3, 6, 9\right], 1 :\left[4, 5, 6, 7\right], 2 :\left[4, 5, 7, 8\right], 3 :\left[4, 5, 6, 9\right], 4 :\left[8, 9\right], 5 :\left[7, 8\right], 6 :\left[7, 8\right], 7 :\left[9\right], 8 :\left[9\right]$ & $\left[\frac{1}{3}, \frac{1}{3}, \frac{1}{3}, \frac{1}{3}, \frac{1}{3}, \frac{1}{3}, \frac{1}{3}, \frac{1}{3}, \frac{1}{3}, \frac{1}{3}\right]$ \\ \hline
			\texttt{IsWubGmeO} & $0 :\left[1, 2, 3, 6, 9\right], 1 :\left[4, 6, 7\right], 2 :\left[4, 5, 7, 8\right], 3 :\left[5, 6, 9\right], 4 :\left[8, 9\right], 5 :\left[7, 8\right], 6 :\left[8\right], 7 :\left[9\right]$ & $\left[0, \frac{2}{3}, \frac{1}{3}, 0, 0, \frac{1}{3}, \frac{1}{3}, \frac{1}{3}, \frac{1}{3}, \frac{2}{3}\right]$ \\ \hline
			\texttt{IsXubGmeO} & $0 :\left[1, 2, 3, 6, 9\right], 1 :\left[4, 5, 6, 7\right], 2 :\left[4, 5, 7, 8\right], 3 :\left[5, 6, 9\right], 4 :\left[8, 9\right], 5 :\left[7, 8\right], 6 :\left[8\right], 7 :\left[9\right]$ & $\left[0, \frac{1}{3}, \frac{1}{3}, \frac{2}{3}, \frac{1}{3}, \frac{1}{3}, \frac{1}{3}, \frac{1}{3}, \frac{1}{3}, \frac{1}{3}\right]$ \\ \hline
			\texttt{IsXubGmeW} & $0 :\left[1, 2, 3, 6, 9\right], 1 :\left[4, 5, 6, 7\right], 2 :\left[4, 5, 7, 8\right], 3 :\left[5, 6, 9\right], 4 :\left[8, 9\right], 5 :\left[7, 8\right], 6 :\left[8\right], 7 :\left[9\right], 8 :\left[9\right]$ & $\left[0, \frac{1}{3}, \frac{1}{3}, \frac{1}{3}, \frac{1}{3}, \frac{1}{3}, \frac{2}{3}, \frac{1}{3}, \frac{1}{3}, \frac{1}{3}\right]$ \\ \hline
			\texttt{IsXubKmeO} & $0 :\left[1, 2, 3, 6, 9\right], 1 :\left[4, 5, 6, 7\right], 2 :\left[4, 5, 7, 8\right], 3 :\left[5, 6, 9\right], 4 :\left[8, 9\right], 5 :\left[7, 8\right], 6 :\left[7, 8\right], 7 :\left[9\right]$ & $\left[0, \frac{1}{3}, \frac{1}{3}, \frac{1}{3}, \frac{1}{3}, \frac{1}{3}, \frac{1}{3}, 0, \frac{1}{3}, \frac{2}{3}\right]$ \\ \hline
			\texttt{IWq\`{}qjoq\_} & $0 :\left[2, 4, 5, 8, 9\right], 1 :\left[2, 4, 7, 8, 9\right], 2 :\left[5, 6, 8\right], 3 :\left[6, 7, 8\right], 4 :\left[6, 9\right], 5 :\left[7\right], 6 :\left[9\right]$ & $\left[0, \frac{1}{3}, \frac{1}{3}, \frac{1}{3}, \frac{2}{3}, \frac{2}{3}, \frac{1}{3}, \frac{1}{3}, \frac{1}{3}, 0\right]$ \\ \hline
			\texttt{J?BD@g]Qvo?} & $0 :\left[5, 6, 10\right], 1 :\left[5, 9, 10\right], 2 :\left[6, 7, 10\right], 3 :\left[7, 8, 10\right], 4 :\left[8, 9, 10\right], 5 :\left[7, 8\right], 6 :\left[8, 9\right], 7 :\left[9\right]$ & $\left[0, 0, \frac{1}{3}, \frac{1}{3}, \frac{1}{3}, \frac{2}{3}, \frac{1}{3}, \frac{1}{3}, \frac{1}{3}, \frac{1}{3}, \frac{2}{3}\right]$ \\ \hline
			\texttt{J?\`{}DCcw\textasciicircum Fh?} & $0 :\left[4, 6, 7, 10\right], 1 :\left[5, 9, 10\right], 2 :\left[6, 8, 9, 10\right], 3 :\left[7, 8, 9, 10\right], 4 :\left[8, 9\right], 5 :\left[9, 10\right], 6 :\left[7\right], 8 :\left[10\right]$ & $\left[0, 0, \frac{1}{3}, \frac{1}{3}, \frac{1}{3}, \frac{2}{3}, \frac{2}{3}, \frac{1}{3}, \frac{1}{3}, \frac{1}{3}, \frac{1}{3}\right]$ \\ \hline
			\texttt{J?\`{}DCcw\textasciitilde Fh?} & $0 :\left[4, 6, 7, 9, 10\right], 1 :\left[5, 9, 10\right], 2 :\left[6, 8, 9, 10\right], 3 :\left[7, 8, 9, 10\right], 4 :\left[8, 9\right], 5 :\left[9, 10\right], 6 :\left[7\right], 8 :\left[10\right]$ & $\left[0, 0, \frac{1}{3}, \frac{1}{3}, \frac{1}{3}, \frac{2}{3}, \frac{1}{3}, \frac{2}{3}, \frac{1}{3}, \frac{1}{3}, \frac{1}{3}\right]$ \\ \hline
			\texttt{K?ABA\_YIV\{{}\textasciicircum c} & $0 :\left[5, 10, 11\right], 1 :\left[6, 7, 10, 11\right], 2 :\left[6, 9, 10, 11\right], 3 :\left[7, 8, 10, 11\right], 4 :\left[8, 9, 10, 11\right], 5 :\left[10, 11\right], 6 :\left[8, 10\right], 7 :\left[9\right], 8 :\left[11\right]$ & $\left[0, 0, \frac{1}{3}, \frac{1}{3}, \frac{1}{3}, \frac{2}{3}, \frac{2}{3}, \frac{2}{3}, \frac{1}{3}, \frac{1}{3}, 0, \frac{1}{3}\right]$ \\ \hline
			\texttt{L?\`{}DF\`{}Y\}DwFwFs} & $0 :\left[4, 6, 7, 9, 10\right], 1 :\left[5, 7, 8, 9\right], 2 :\left[6, 7, 9, 10, 11\right], 3 :\left[7, 8, 9, 10, 11, 12\right], 4 :\left[8, 9, 10, 11, 12\right], 5 :\left[10, 11, 12\right], 6 :\left[8, 11, 12\right], 7 :\left[11, 12\right], 9 :\left[12\right]$ & $\left[0, \frac{2}{3}, \frac{1}{3}, \frac{1}{3}, \frac{1}{3}, 0, \frac{1}{3}, \frac{1}{3}, \frac{1}{3}, 0, \frac{2}{3}, \frac{1}{3}, 0\right]$ \\ \hline
			\texttt{L?\`{}FBQclQt\textasciitilde C\textasciitilde ?} & $0 :\left[4, 6, 8, 9, 11, 12\right], 1 :\left[5, 6, 7, 10, 11, 12\right], 2 :\left[6, 7, 8, 9, 11, 12\right], 3 :\left[9, 10, 11, 12\right], 4 :\left[7, 10, 11, 12\right], 5 :\left[8, 9, 12\right], 6 :\left[10\right], 7 :\left[9\right], 8 :\left[10, 11\right], 9 :\left[10\right]$ & $\left[0, 0, 0, \frac{1}{3}, \frac{1}{3}, \frac{1}{3}, \frac{2}{3}, \frac{2}{3}, \frac{2}{3}, \frac{1}{3}, \frac{1}{3}, \frac{1}{3}, \frac{1}{3}\right]$ \\ \hline
			\texttt{L?\`{}FBqdlQt~C~?} & $0 :\left[4, 6, 8, 9, 11, 12\right], 1 :\left[5, 6, 7, 10, 11, 12\right], 2 :\left[6, 7, 8, 9, 11, 12\right], 3 :\left[7, 9, 10, 11, 12\right], 4 :\left[7, 10, 11, 12\right], 5 :\left[8, 9, 12\right], 6 :\left[10\right], 7 :\left[8, 9\right], 8 :\left[10, 11\right], 9 :\left[10\right]$ & $\left[0, \frac{2}{5}, \frac{2}{5}, \frac{2}{5}, \frac{4}{5}, \frac{2}{5}, \frac{2}{5}, 0, \frac{2}{5}, \frac{2}{5}, \frac{1}{5}, \frac{1}{5}, \frac{1}{5}\right]$ \\ \hline
			\texttt{L?\`{}FEb\{{}Fdk]aNo} & $0 :\left[4, 6, 7, 8, 10, 11\right], 1 :\left[5, 6, 7, 8, 11\right], 2 :\left[6, 8, 10, 11, 12\right], 3 :\left[7, 8, 9, 10, 11, 12\right], 4 :\left[8, 9, 12\right], 5 :\left[8, 9, 10, 11, 12\right], 6 :\left[9, 10, 12\right], 7 :\left[12\right], 9 :\left[11\right]$ & $\left[0, \frac{1}{3}, 0, 0, \frac{2}{3}, \frac{1}{3}, \frac{1}{3}, \frac{2}{3}, \frac{1}{3}, \frac{1}{3}, \frac{2}{3}, \frac{1}{3}, 0\right]$ \\ \hline
			\texttt{L?\`{}FE\`{}wFdk]aNo} & $0 :\left[4, 6, 7, 10, 11\right], 1 :\left[5, 6, 7, 8, 11\right], 2 :\left[6, 8, 10, 11, 12\right], 3 :\left[7, 8, 9, 10, 11, 12\right], 4 :\left[8, 9, 12\right], 5 :\left[9, 10, 11, 12\right], 6 :\left[9, 10, 12\right], 7 :\left[12\right], 9 :\left[11\right]$ & $\left[0, \frac{1}{3}, 0, 0, \frac{2}{3}, \frac{1}{3}, \frac{1}{3}, \frac{1}{3}, \frac{1}{3}, \frac{1}{3}, \frac{2}{3}, \frac{1}{3}, \frac{1}{3}\right]$ \\ \hline
			\texttt{L?\`{}FF\`{}wFtk]aNo} & $0 :\left[4, 6, 7, 10, 11\right], 1 :\left[5, 6, 7, 8, 11\right], 2 :\left[6, 7, 8, 10, 11, 12\right], 3 :\left[7, 8, 9, 10, 11, 12\right], 4 :\left[8, 9, 12\right], 5 :\left[9, 10, 11, 12\right], 6 :\left[9, 10, 12\right], 7 :\left[9, 12\right], 9 :\left[11\right]$ & $\left[0, \frac{1}{3}, 0, 0, \frac{1}{3}, \frac{1}{3}, \frac{1}{3}, \frac{1}{3}, \frac{2}{3}, \frac{1}{3}, \frac{2}{3}, \frac{1}{3}, \frac{1}{3}\right]$ \\ \hline
			\texttt{L?\`{}NBQclQt\textasciitilde C\textasciitilde ?} & $0 :\left[4, 6, 8, 9, 11, 12\right], 1 :\left[5, 6, 7, 10, 11, 12\right], 2 :\left[6, 7, 8, 9, 11, 12\right], 3 :\left[9, 10, 11, 12\right], 4 :\left[5, 7, 10, 11, 12\right], 5 :\left[8, 9, 12\right], 6 :\left[10\right], 7 :\left[9\right], 8 :\left[10, 11\right], 9 :\left[10\right]$ & $\left[0, 0, 0, \frac{1}{3}, \frac{1}{3}, \frac{1}{3}, \frac{2}{3}, \frac{2}{3}, \frac{2}{3}, \frac{1}{3}, \frac{1}{3}, \frac{1}{3}, \frac{1}{3}\right]$ \\ \hline
			\texttt{L?pFEb\{Fd\{]aNo} & $\left\{0 : \left[4, 6, 7, 8, 10, 11\right], 1 : \left[4, 5, 6, 7, 8, 11\right], 2 : \left[6, 8, 10, 11, 12\right], 3 : \left[7, 8, 9, 10, 11, 12\right], 4 : \left[8, 9, 10, 12\right], 5 : \left[8, 9, 10, 11, 12\right], 6 : \left[9, 10, 12\right], 7 : \left[12\right], 9 : \left[11\right]\right\}$ & $\left[\frac{1}{3}, \frac{1}{3}, \frac{1}{3}, 0, \frac{1}{3}, 0, \frac{1}{3}, \frac{2}{3}, 0, \frac{2}{3}, \frac{1}{3}, \frac{1}{3}, \frac{1}{3}\right]$ \\ \hline
			\texttt{L?pFFb\{Ft\{]aNo}\quad($\overline{C_{13}^{3}}$) & $\left\{0 : \left[4, 6, 7, 8, 10, 11\right], 1 : \left[4, 5, 6, 7, 8, 11\right], 2 : \left[6, 7, 8, 10, 11, 12\right], 3 : \left[7, 8, 9, 10, 11, 12\right], 4 : \left[8, 9, 10, 12\right], 5 : \left[8, 9, 10, 11, 12\right], 6 : \left[9, 10, 12\right], 7 : \left[9, 12\right], 9 : \left[11\right]\right\}$ & $\left[\frac{1}{3}, \frac{1}{3}, \frac{1}{3}, \frac{1}{3}, \frac{1}{3}, \frac{1}{3}, \frac{1}{3}, \frac{1}{3}, \frac{1}{3}, \frac{1}{3}, \frac{1}{3}, \frac{1}{3}, \frac{1}{3}\right]$ \\ \hline
			\texttt{O?\`{}@?boNBoBsBy\`{}\}WZfAk} & $\left\{0 : \left[4, 8, 13, 14, 15\right], 1 : \left[5, 8, 10, 14, 15\right], 2 : \left[6, 8, 9, 10, 15\right], 3 : \left[7, 8, 9, 10, 11\right], 4 : \left[9, 10, 11, 12\right], 5 : \left[9, 11, 12, 13\right], 6 : \left[11, 12, 13, 14\right], 7 : \left[12, 13, 14, 15\right], 8 : \left[11, 12, 13\right], 9 : \left[13, 14, 15\right], 10 : \left[12, 13, 14\right], 11 : \left[14, 15\right], 12 : \left[15\right]\right\}$ & $\left[0, \frac{1}{3}, \frac{2}{3}, \frac{2}{3}, 1, \frac{2}{3}, \frac{1}{3}, \frac{1}{3}, 0, 0, 0, 0, 0, \frac{1}{3}, \frac{1}{3}, \frac{1}{3}\right]$ \\ \hline
			\texttt{OCrb\`{}roNBoBsBy\`{}\}WZfAk} & $\left\{0 : \left[3, 4, 5, 8, 13, 14, 15\right], 1 : \left[4, 5, 6, 8, 10, 14, 15\right], 2 : \left[5, 6, 7, 8, 9, 10, 15\right], 3 : \left[6, 7, 8, 9, 10, 11\right], 4 : \left[7, 9, 10, 11, 12\right], 5 : \left[9, 11, 12, 13\right], 6 : \left[11, 12, 13, 14\right], 7 : \left[12, 13, 14, 15\right], 8 : \left[11, 12, 13\right], 9 : \left[13, 14, 15\right], 10 : \left[12, 13, 14\right], 11 : \left[14, 15\right], 12 : \left[15\right]\right\}$ & $\left[\frac{2}{5}, \frac{1}{5}, \frac{2}{5}, \frac{1}{5}, \frac{2}{5}, \frac{1}{5}, \frac{2}{5}, \frac{1}{5}, \frac{1}{5}, 0, \frac{2}{5}, \frac{2}{5}, 0, \frac{2}{5}, 0, \frac{2}{5}\right]$ \\ \hline
			
		\end{tabular}
	\end{adjustbox}

\end{table}

\bibliographystyle{plain}      
\bibliography{bibliography}   

\end{document}